\newcommand{\longeq}[2]{\xlongequal[\!#2\!]{\!#1\!}}
\newcommand{\longineq}[3]{\overset{#1}{\underset{#2}{#3}}}
\def\argmax{\mathop{\arg\max}\limits}	%
\DeclareMathOperator{\tr}{tr}
\newcommand{\myparagraph}[1]{\smallskip\noindent\textbf{\emph{#1.}}}
\newtheorem{theorem}{Theorem}
\newtheorem*{assumption*}{Assumption}
\newtheorem*{problem*}{Problem}
\newtheorem{lemma}{Lemma}
\begin{document}
\title{\LARGE \bf Joint Estimation and Localization in Sensor Networks}

\author{Nikolay Atanasov, Roberto Tron, Victor M. Preciado, and George J. Pappas% <-this % stops a space
\thanks{This work has been submitted to the IEEE for possible publication. Copyright may be transferred without notice, after which this version may no longer be accessible.}%
\thanks{This work was supported by ONR-HUNT grant N00014-08-1-0696 and by TerraSwarm, one of six centers of STARnet, a Semiconductor Research Corporation program sponsored by MARCO and DARPA.}%
\thanks{N. Atanasov, V. Preciado, and G. Pappas are with the Department of Electrical and Systems Engineering, University of Pennsylvania, Philadelphia, PA 19104, {\tt\small\{atanasov, preciado, pappasg\}@seas.upenn.edu}.}%
\thanks{R. Tron is with the Department of Computer and Information Science, University of Pennsylvania, Philadelphia, PA 19104, {\tt\small tron@seas.upenn.edu}}%
}
\maketitle

% Abstract
\begin{abstract}
This paper addresses the problem of collaborative tracking of dynamic targets in wireless sensor networks. A novel distributed linear estimator, which is a version of a \textit{distributed Kalman filter}, is derived. We prove that the filter is mean square consistent in the case of static target estimation. When large sensor networks are deployed, it is common that the sensors do not have good knowledge of their locations, which affects the target estimation procedure. Unlike most existing approaches for target tracking, we investigate the performance of our filter when the sensor poses need to be estimated by an auxiliary localization procedure. The sensors are localized via a distributed Jacobi algorithm from noisy relative measurements. We prove strong convergence guarantees for the localization method and in turn for the joint localization and target estimation approach. The performance of our algorithms is demonstrated in simulation on environmental monitoring and target tracking tasks.
\end{abstract}
%The measurements received by a sensor depend on the sensor configuration. Most existing approaches for parameter estimation and target tracking neglect this fact. They assume implicitly that the sensor configurations are known and focus on providing accurate distributed approaches. However, when large sensor networks are deployed, it is common that the sensors do not have good knowledge of their configurations but instead only a rough estimate. In practice, an auxilliary procedure is need to estimate the sensor states and its effect on the peformance of the target estimator is rarely investigted. The performance of the proposed distributed coupled localization and estimation procedure is demonstrated on a target tracking task.

% Sections
\section{Introduction}
% Intro
%A central problem in sensor networks is to estimate and track the state of dynamic phenomena of interest that evolve in the sensing field. 
% In order to address problems of practical size, the target estimator needs to be distributed.
%on their own, but which can be complemented into a complete estimate of the state of the target when all the sensors collaborate with each other. However, it is often the case that, in order for the sensors to be able to meaningfully exchange their pieces of information, they need to know their own state, such as their position, with respect to some common reference. Hence, in a general situation, the sensors need to solve both the \emph{localization} and the \emph{state estimation} problems. To complicate matters, the error on the state estimation is in general coupled with the one on the localization estimation and, since the number of deployed nodes could be large, both estimation algorithms need to be distributed.
% As the large number of deployed nodes in practical situations prohibits the use of centralized algorithms, t
A central problem in networked sensing systems is the estimation and tracking of the states of dynamic phenomena of interest that evolve in the sensing field. Potential applications include environmental monitoring \cite{Mainwaring_WSN02,Leonard_IEEE07}, surveillance and reconnaissance \cite{Yan_SenSys03, Lu_TWC08}, social networks \cite{Molavi_GEB12}. In most situations, individual sensors receive partially informative measurements which are insufficient to estimate the target state in isolation. The sensors need to engage in information exchange with one another and solve a distributed estimation problem. To complicate matters, it is often the case that the sensors need to know their own locations with respect to a common reference in order to utilize the target measurements meaningfully. Hence, in general, the sensors face a joint \emph{localization} and \emph{estimation} problem. Virtually all existing work in distributed target estimation assumes implicitly that the localization problem is solved, while all the literature on localization does not consider the effect of the residual errors on a common estimation task. The goal of this paper is to show that the two problems can be solved jointly, and that, with simple measurement models, the resulting estimates have strong convergence guarantees.

%What we do
%The distributed linear estimator derived here is applicable to \textit{heterogeneous sensors} and has strong performance guarantees.
%The information available to a sensor depends on its pose in the environment. The existing work in target estimation and tracking assumes implicitly that the sensor states are known and concentrates on distributed estimation algorithms. It is common that the sensors, especially when they are lightweight platforms with no global localization capabilities, do not have knowledge of their own states.
% We consider social learning and networked estimation where the state of the world follows a potentially unstable discrete-time linear dynamical system. The dynamical system is monitored by a network of agents that have linear noisy observations and the agents are able to communicate over a sparse topology. Estimation within such settings heavily relies on the communication time-scale among the agents.
%We assume that the sensors can repeatedly and sequentially measure their relative positions along the edges of a graph and a linear function of the target state. We also assume that all these measurements are corrupted with Gaussian noise.
\myparagraph{Assumptions and contributions} 
We assume that the sensors obtain linear Gaussian measurements of the target state and repeated sequential measurements of their relative positions along the edges of a graph. Our contributions are as follows:
\begin{itemize}
	\item We derive a distributed linear estimator for tracking dynamic targets. We prove that the filter is mean-square consistent in the case of a static target.
	\item We provide a distributed algorithm for sensor localization from sequential relative measurements and prove mean-square and strong consistency.
	\item We prove mean-square consistency of the joint localization and target estimation procedure.
\end{itemize}

% and the filter proposed by Khan et al. \cite{Khan_CDC10, Shahrampour_NIPS13}. %All three can be considered versions of a \textit{distributed Kalman filter}. %Comparing their performance will be the focus of future work. To our knowledge determining the optimal, in the sense of fastest convergence and minimum tracking error, distributed linear estimator remains an open problem.
% Observability in this setting is characterized via the notion of separably estimable observation models.
%  - a function of the communication network and observation models
%  when the sensor measurements are spatially correlated.
\myparagraph{Related work}
Our target estimation algorithm was inspired by Rahnama Rad and Tahbaz-Salehi \cite{Alireza_CDC10}, who propose an algorithm for distributed \emph{static} parameter estimation using nonlinear sensing models. We specialize their model to heterogeneous sensors with linear Gaussian observations, show stronger convergence results (mean-square consistency instead of weak consistency), and then generalize the solution to \emph{dynamic} targets. Our filter is similar to the Kalman-Consensus \cite{Olfati-Saber_CDC07, Olfati-Saber_CDC09} and the filter proposed by Khan et al. \cite{Khan_CDC10, Shahrampour_NIPS13}. Khan et al. \cite{Khan_CDC10} show that a dynamic target can be tracked with bounded error if the norm of the target system matrix is less than the network tracking capacity. Shahrampour et al. \cite{Shahrampour_NIPS13} quantify the estimation performance using a global loss function and show that the asymptotic estimation error depends on its decomposition. Kar et al. \cite{Kar_TIT12} study distributed static parameter estimation with nonlinear observation models and noisy inter-sensor communication. Related work also includes \cite{Cortes_TAC09}, which combines the Jacobi over-relaxation method with dynamic consensus to compute distributed weighted least squares.

Our localization algorithm follows the lines of the Jacobi algorithm, first proposed for localization in sensor networks by Barooah and Hespanha \cite{Barooah_CS07,Barooah_PhD}. In contrast with their approach, we consider repeated relative measurements and show strong convergence guarantees for the resulting sequential localization algorithm. Other work in sensor network localization considers nonlinear and less informative measurement models than those used in this paper. For instance \cite{aspnes2006theory, Moore_SenSys04,so2007theory,Priyantha_ENSS03} address the problem of localization using range-only measurements, which is challenging because a graph with specified edge lengths can have several embeddings in the plane. Khan et al. \cite{Khan_TSP09} introduce a distributed localization (DILOC) algorithm, which uses the barycentric coordinates of a node with respect to its neighbors and show its convergence via a Markov chain. Diao et al. \cite{Diao_ASCC13} relax the assmuption of DILOC that all nodes must be inside the convex hull of the anchors. Localization has also been considered in the context of camera networks \cite{Tron:CDC09}.

%Overall, in contrast to what we do here, none of the papers mentioned above considers the problems of state estimation and localization at the same time.

\myparagraph{Paper organization} The joint localization and estimation problem is formulated precisely in Sec. \ref{sec:problem}. The distributed linear estimator for target tracking is derived in Sec. \ref{sec:estimation} assuming known sensor locations. A distributed Jacobi algorithm is introduced in Sec. \ref{sec:localization} to localize the sensors using relative measurements when the true locations are unknown. Mean-square and strong consistency are proven. In Sec. \ref{sec:joint_loc_estm}, we show that the error of the target estimator, when combined with the localization procedure, remains arbitrarily small. All proofs are provided in the Appendix.

\section{Problem Formulation}
\label{sec:problem}
\begin{figure*}[htb!]
\centering
\begin{minipage}[t]{.36\textwidth}
  \centering
  \includegraphics[width=\linewidth]{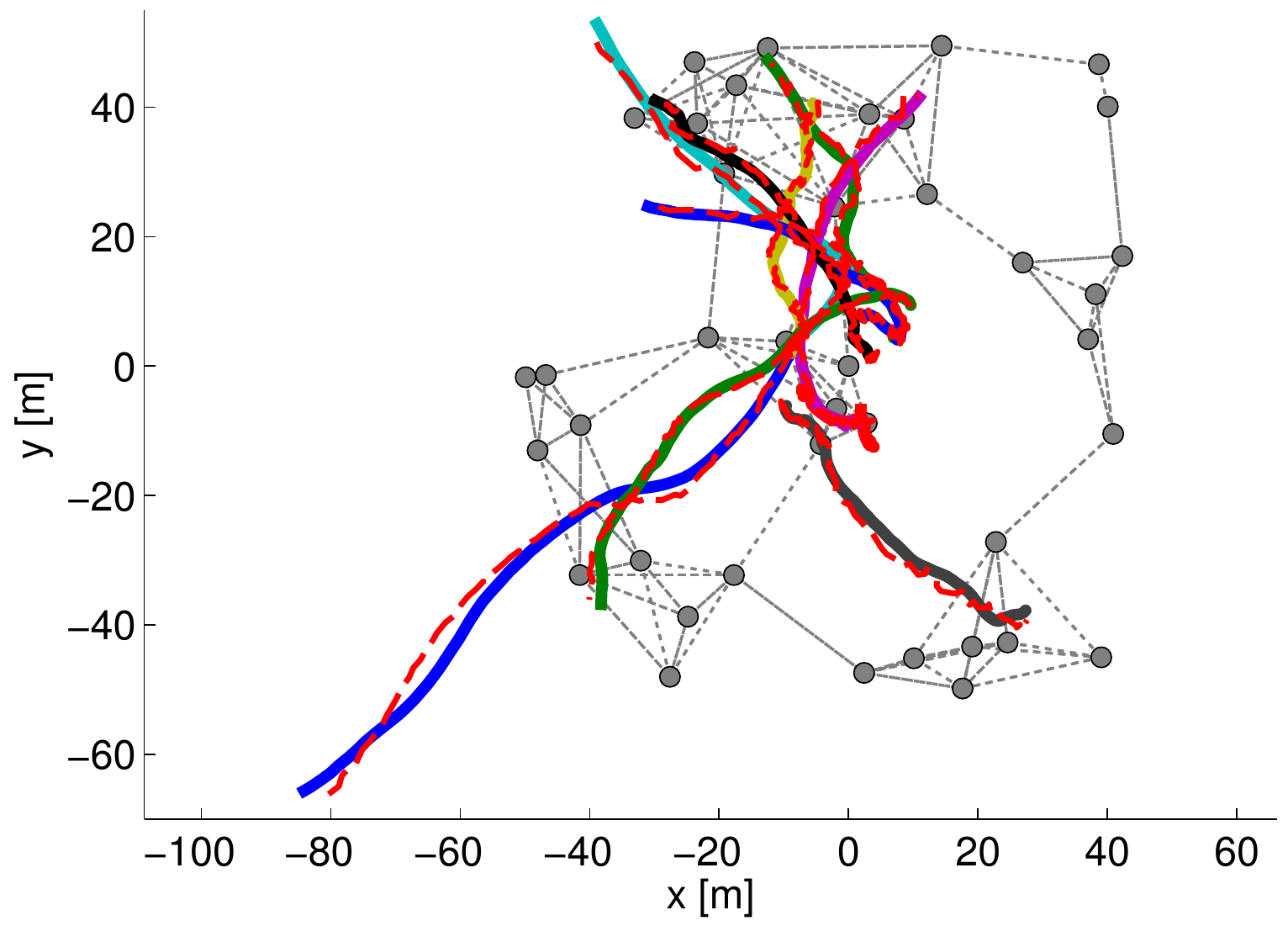}
  \captionof{figure}{A realization of the target tracking scenario in which a sensor network with 40 nodes tracks 10 mobile targets via range and bearing measurements}
  \label{fig:track_inst}
\end{minipage}%
\begin{minipage}{.02\textwidth}
$ $
\end{minipage}%
\begin{minipage}[t]{.60\textwidth}
  \centering
  \includegraphics[width=\linewidth]{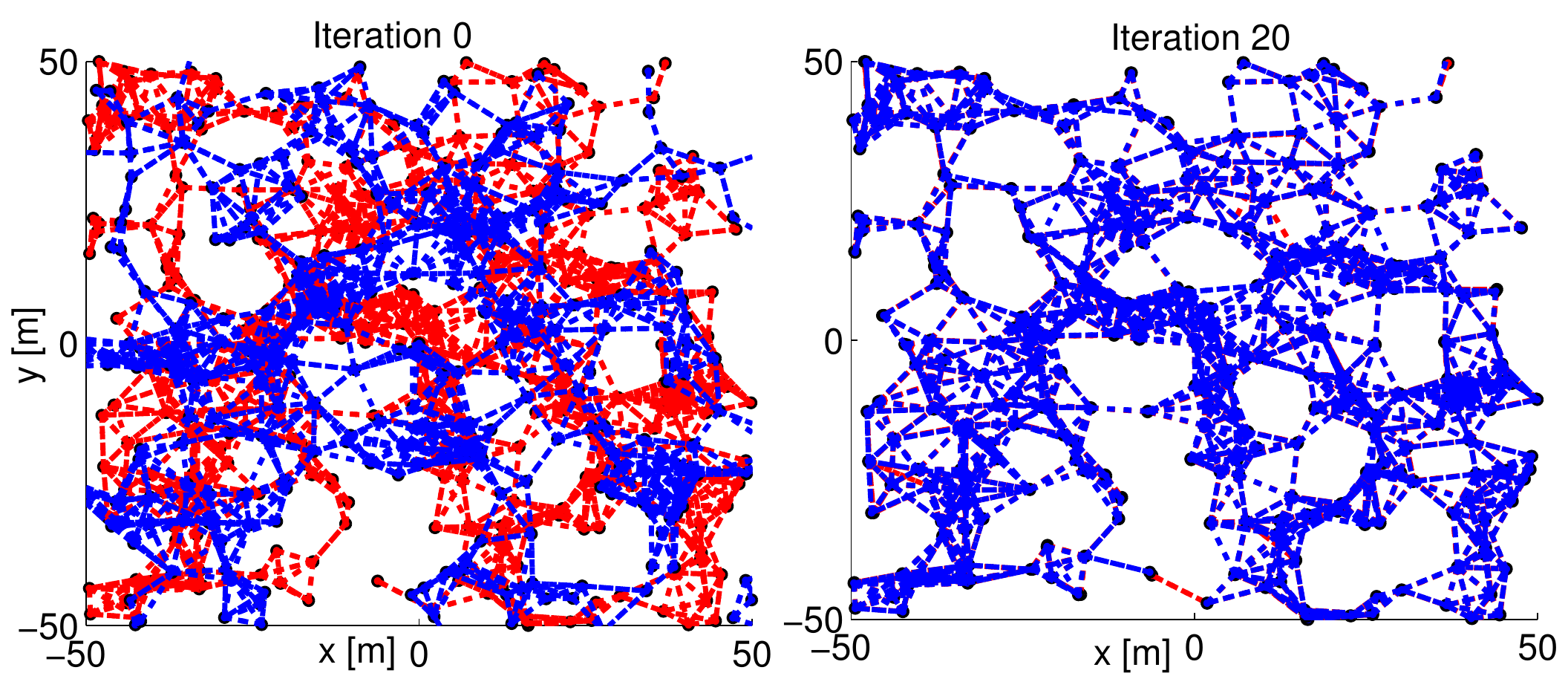}
  \captionof{figure}{Initial and final (after 20 steps) node locations estimated by the distributed localization algorithm on a randomly generated graph with 300 nodes ans 1288 edges}
  \label{fig:loc_inst}
\end{minipage}
\end{figure*}

Consider a static sensor network composed of $n$ sensors with configurations $\{x_1,\ldots,x_n\} \subset \mathcal{X} \cong \mathbb{R}^d$. The configuration of a sensor may include its position, orientation, and other operational parameters but we will refer to it, informally, as the sensor's location. The communication network interconnecting the sensors is represented by an undirected graph $G = (V,E)$ with vertices $V :=\{1,\ldots,n\}$ corresponding to the sensors and $|E|= m$ edges. An edge $(j,i) \in E$ from sensor $j$ to sensor $i$ exists if they can communicate. The set of nodes (neighbors) connected to sensor $i$ is denoted by $\mathcal{N}_i$.

The task of the sensors is to estimate and track the state $y(t) \in \mathcal{Y} \cong \mathbb{R}^{d_y}$ of a phenomenon of interest (target), where $\mathcal{Y}$ is a convex set. The target evolves according to the following \textit{target motion model}:
\begin{equation}
\label{eq:tmm}
y(t+1) = Fy(t) + \eta(t), \qquad \eta(t) \sim \mathcal{N}(0,W),
\end{equation}
where $\eta(t)$ is the process noise, whose values at any pair of times are independent. Sensor $i$, depending on its location $x_i$, can obtain a measurement $z_{i}(t)$ of the target state $y(t)$ at time $t$ according to the following \textit{sensor observation model}:
\begin{equation}
\label{eq:som}
z_{i}(t) = H_i(x_i)y(t) + v_{i}(t,x_i), \;\, v_{i}(t,x_i) \sim \mathcal{N}(0,V_i(x_i)),
\end{equation}
where $v_{i}(t,x_i)$ is a sensor-state-dependent measurement noise specific to sensor $i$, which is independent at any pair of times and across different sensors. The measurement noise is independent of the target noise $\eta(t)$ as well. The signals, $z_i(t)$, observed by a single sensor, although potentially informative, do not reveal the target state completely, i.e. each sensor faces a local identification problem. We assume, however, that the target is observable if one has access to the signals received by all sensors.

The sensors need to know their locations in order to use the signals $z_i(t)$ to estimate the targets state. However, when large sensor networks are deployed, it is common that the sensors do not have good knowledge of their positions but instead only a rough estimate (prior). We suppose that each sensor has access to noisy relative measurements of the positions of its neighbors\footnote{The graphs describing the communication and the relative measurement topologies might be different in practice. However, we assume that they are the same in order to simplify the presentation.}, which can be used to localize the sensors. In particular, at time $t$ sensor $i$ receives the following noisy relative configuration measurement from its neighbor $j$:
\begin{equation}
\label{eq:rel_meas}
s_{ij}(t) = x_j - x_i + \epsilon_{ij}(t), \qquad \epsilon_{ij}(t) \sim \mathcal{N}(0,\mathcal{E}_{ij}),
\end{equation}
where $\epsilon_{ij}(t)$ is a measurement noise, which is independent at any pair of times and across sensor pairs. The relative measurement noises are independent of the target measurement and motion noises too. Since there is translation ambiguity in the measurements \eqref{eq:rel_meas} we assume that all sensors agree to localize themselves in the reference frame of sensor 1. The location estimates can then be used in place of the unknown sensor positions during the target estimation procedure. The joint localization and estimation problem is summarized below.

\begin{problem*}[Joint Estimation and Localization]
The task of each sensor $i$ is to construct estimators $\hat{x}_{i}(t)$ and $\hat{y}_{i}(t)$ of its own location $x_i$ and of the target state $y$ in a distributed manner, i.e. using information only from its neighbors and the measurements $\{s_{ij}(t) \mid j \in \mathcal{N}_i\}$ and $\{z_{i}(t)\}$.
\end{problem*}

To illustrate the results, we use two scenarios which fit our models throughout the paper. The first is environmental monitoring problem in which a sensor network of remote methane leak detectors (RMLD), based on tunable diode laser absorption spectroscopy, is deployed to estimate the methane concentration in a landfill. The methane field is assumed static (i.e. $F = I_{d_y}, W = 0$) and can be modeled by discretizing the environment into cells and representing the gas concentration with a discrete Gaussian random field, $y \in \mathbb{R}^{d_y}$ (See Fig.\ref{fig:field}). It was verified experimentally in \cite{Hernandez_ICRA13} that the RMLD sensors fit the linear model in \eqref{eq:som}. Second, we consider tracking a swarm of mobile vehicles via a sensor network using range and bearing measurements (See Fig. \ref{fig:track_inst}). The position $(y_j^1, y_j^2) \in \mathbb{R}^2$ and velocity $(\dot{y}_j^1, \dot{y}_j^2) \in \mathbb{R}^2$ of the $j$th target have discretized double integrator dynamics driven by Gaussian noise:
\begin{gather*}
y_j(t+1) \!=\! \begin{bmatrix}
  I_2 & \tau I_2\\
  0 & I_2
\end{bmatrix} y_j(t) + \eta_j(t), \;\; W := q \textstyle{\begin{bmatrix}
  \frac{\tau^3}{3} I_2 & \frac{\tau^2}{2} I_2\\
  \frac{\tau^2}{2} I_2 & \tau I_2
\end{bmatrix}},
\end{gather*}
where $y_j = [y_j^1, y_j^2, \dot{y}_j^1, \dot{y}_j^2]^T$ is the $j$-th target state, $\tau$ is the sampling period is $sec$, and $q$ is a diffusion strength measured in $(\frac{m}{sec^2})^2 \frac{1}{Hz}$. Each sensor in the network takes noisy range and bearing measurements of the target's position:
\begin{equation}
z_{ij}(t) = \begin{bmatrix}
\sqrt{(y_j^1 - x_i^1)^2 + (y_j^2-x_i^2)^2}\\
\arctan\bigl((y_j^2-x_i^2)/(y_j^1 - x_i^1)\bigr)
\end{bmatrix} + v(t,x_i,y_j),\label{eq:rb_model}
\end{equation}
where $x_i := (x_i^1,x_i^2) \in \mathbb{R}^2$ is the sensor's location and the noise $v$ grows linearly with the distance between the sensor and the target. The observation model is nonlinear in this case so we resort to linearization in order to apply our framework.

%{\color{red}The measurement model \eqref{eq:som} is linear in the state $y$ (although generally non-linear in $x$). This assumption is satisfied in practical cases such as with cameras with high zoom values (the so-called \emph{affine} camera model \cite{Hartley_04}) or with remote methane leak detectors \cite{}. Moreover, we will show in simulations that \eqref{eq:som} can also be used to approximate non-linear models.}

%The first one is environmental monitoring, in which a sensor network is to observe a field of interest (e.g. humidity, temperature, gas concentration) and learn the mean and covariance of a discrete random field. The environment is split into cells and the value of the field in each cell is denoted by $y_i$. Fig. \ref{fig:field} gives an example of such a field.

%To motivate the discussion consider the methane emission monitoring problem, which was addressed by the Best Service Robotics paper [2] at ICRA 2013. The task is to estimate the gas concentration in a landfill using a remote methane leak detector (RMLD) based on tunable diode laser absorption spectroscopy. The RMLD sensor is mounted on a robotic platform, Gasbot, which follows an exploration path pre- specified by hand. In this paper, we would like to automatically generate the most informative path for the Gasbot.

%========= Distributed Estimation =====%
\section{Distributed Target Estimation}
\label{sec:estimation}
We begin with the task of estimating and tracking the dynamic state of a target via the sensor network. For now we assume that the sensors know their positions and concentrate on the estimation task. We specializing the general parameter estimation scheme of Rahnama Rad and Tahbaz-Salehi \cite{Alireza_CDC10} to linear Gaussian observation models such as (\ref{eq:som}). We show that the resulting distributed linear filter is mean-square consistent\footnote{\label{ftn:consistency}A distributed estimator of a parameter $y$ is \textit{weakly consistent} if all estimates, $\hat{y}_{i}(t)$, converge in probability to $y$, i.e. $\displaystyle{\lim_{t\to\infty} \mathbb{P}\bigl(\|\hat{y}_{i}(t) - y\| \geq \epsilon \bigr) = 0}$ for any $\epsilon > 0$ and all $i$. It is \textit{mean-square consistent} if all estimates converge in $L^2$ to $y$, i.e. $\displaystyle{\lim_{t \to \infty} \mathbb{E} \bigl[ \|\hat{y}_{i}(t) - y\|^2 \bigr] = 0},\forall i$.} when the target is stationary. This result is stronger than the weak consistency\textsuperscript{\ref{ftn:consistency}} shown in the general non-Gaussian case in \cite[Thm.1]{Alireza_CDC10}.
%Even though our estimator addresses the same problem as Olfati-Saber's Kalman-Consensus Filter \cite{Olfati-Saber_CDC07, Olfati-Saber_CDC09} and the estimator proposed by Khan et al. \cite{Khan_CDC10, Shahrampour_NIPS13}, the three have different forms and can be considered versions of a \textit{distributed Kalman filter}. To our knowledge, comparing the performance of these and determining the optimal distributed linear estimator remains an open problem.
% Let $p_{i,t}:\mathcal{Y} \rightarrow \mathbb{R}_{\geq 0}$ denote the pdf over $\mathcal{Y}$ of sensor $i$ at time $t$.
% the distributed estimator
Suppose for now that the target is stationary, i.e. $y := y(0) = y(1) = \ldots$. To introduce the estimation scheme from \cite{Alireza_CDC10}, suppose also that instead of the linear Gaussian measurements in (\ref{eq:som}), the sensor measurements $z_i(t)$ are drawn from a general distribution with conditional probability density function (pdf) $l_i(\cdot \mid y)$. As before, the signals observed by sensor $i$ are iid over time and independent from the observations of all other sensors. In order to aggregate the information provided to it over time - either through observations or communication with neighbors - each sensor $i$ holds and updates a pdf $p_{i,t}$ over the target state space $\mathcal{Y}$. Consider the following distributed estimation algorithm:
\begin{equation}
\label{eq:dist_estm}
\begin{aligned}
p_{i,t+1}(y) &= \xi_{i,t} l_i(z_i(t+1) \mid y) \prod_{j \in \mathcal{N}_i \cup \{i\}} \bigl(p_{j,t}(y) \bigr)^{\kappa_{ij}},\\
\hat{y}_{i}(t) &\in \argmax_{y \in \mathcal{Y}} p_{i,t}(y), 
\end{aligned}
\end{equation}
where $\xi_{i,t}$ is a normalization constant ensuring that $p_{i,t+1}$ is a proper pdf and $\kappa_{ij} > 0$ are weights such that $\sum_{j \in \mathcal{N}_i \cup \{i\}} \kappa_{ij} = 1$. The update is the same as the standard Bayes rule with the exception that sensor $i$ does not just use its own prior but a \textit{geometric average} of its neighbors' priors. Given a connected graph, the authors of \cite{Alireza_CDC10} show that (\ref{eq:dist_estm}) is weakly consistent under broad assumptions on the observation models $l_i$.

Next, we specialize the estimator in (\ref{eq:dist_estm}) to the linear Gaussian measurement model in (\ref{eq:som}). Let $\mathcal{G}(\omega,\Omega)$ denote a Gaussian distribution (in information space) with mean $\Omega^{-1}\omega$ and covariance matrix $\Omega^{-1}$. The quantities $\omega$ and $\Omega$ are conventionally called \textit{information vector} and \textit{information matrix}, respectively. Suppose that the pdfs $p_{i,t}$ of all sensors $i \in V$ at time $t$ are that of Gaussian distributions $\mathcal{G}(\omega_{i,t},\Omega_{i,t})$. We claim that the posteriors resulting from applying the update in (\ref{eq:dist_estm}) remain Gaussian.

% [Geometric mean of Gaussian random vectors]
\begin{lemma}[\text{\cite[Thm.2]{Barker_CMA95}}]
\label{lem:geom_mean_gauss}
Let $Y_i \sim \mathcal{G}(\omega_i,\Omega_i)$ for $i = 1,\ldots,n$ be a collection of random Gaussian vectors with associated weights $\kappa_i$. The weighted geometric mean, $\prod_{i=1}^n p_i^{\kappa_i}$, of their pdfs $p_i$ is proportional to the pdf of a random vector with distribution $\mathcal{G}\biggl(\sum_{i=1}^n \kappa_i\omega_i, \sum_{i=1}^n \kappa_i\Omega_i \biggr)$.
\end{lemma}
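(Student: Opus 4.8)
The plan is to work in the information (canonical) parametrization, in which the logarithm of a Gaussian density is a quadratic-plus-linear function of $y$; in this representation a weighted geometric mean of densities corresponds to a $\kappa_i$-weighted \emph{sum} of log-densities, which simply adds up the quadratic and linear coefficients. First I would write out the density of $Y_i \sim \mathcal{G}(\omega_i,\Omega_i)$. By the definition recalled just before the lemma, this Gaussian has mean $\Omega_i^{-1}\omega_i$ and covariance $\Omega_i^{-1}$, so expanding the quadratic form and discarding the $y$-independent normalizing factor gives the canonical form
\[
p_i(y) \;\propto\; \exp\!\Bigl(-\tfrac{1}{2}\, y^\top \Omega_i\, y + y^\top \omega_i\Bigr).
\]

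Next, I would form the weighted geometric mean directly from this expression. Since $\prod_{i=1}^n p_i(y)^{\kappa_i} = \exp\bigl(\sum_{i=1}^n \kappa_i \log p_i(y)\bigr)$ and each $\log p_i(y)$ equals $-\tfrac12 y^\top\Omega_i y + y^\top\omega_i$ up to an additive constant, the exponent is again quadratic-plus-linear in $y$:
\[
\prod_{i=1}^n p_i(y)^{\kappa_i} \;\propto\; \exp\!\Bigl(-\tfrac{1}{2}\, y^\top\!\Bigl(\sum_{i=1}^n \kappa_i\Omega_i\Bigr) y + y^\top\!\Bigl(\sum_{i=1}^n \kappa_i\omega_i\Bigr)\Bigr).
\]
Comparing this with the canonical form above, I recognize the right-hand side as being proportional to the density of $\mathcal{G}\bigl(\sum_{i=1}^n \kappa_i\omega_i,\; \sum_{i=1}^n \kappa_i\Omega_i\bigr)$, which is exactly the claimed distribution; all the $\kappa_i$-weighted normalizing constants are $y$-independent and are absorbed into the proportionality.

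The only point requiring care --- and the closest thing to an obstacle, though a mild one --- is to confirm that the resulting object is a bona fide (non-degenerate) Gaussian, i.e. that the aggregated information matrix $\sum_{i=1}^n \kappa_i\Omega_i$ is positive definite, so that the normalizing integral converges and ``proportional to a pdf'' is meaningful. This follows immediately because each $\Omega_i$ is a valid information matrix (hence positive definite) and the weights satisfy $\kappa_i > 0$, so the weighted sum of positive definite matrices is positive definite. With this verified, the identification of parameters in the previous step is exact and the proof is complete; the whole argument is essentially a one-line completion of the square once the densities are written in information form.
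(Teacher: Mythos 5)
Your proof is correct. The paper does not prove this lemma itself---it is imported from \cite[Thm.2]{Barker_CMA95}---but your argument (pass to the canonical/information parametrization, observe that the weighted geometric mean adds the quadratic and linear coefficients, and check that $\sum_i \kappa_i \Omega_i \succ 0$ so the result is a genuine Gaussian) is exactly the standard derivation of that cited result and fills the gap correctly.
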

% [Bayes rule with Gaussians]
\begin{lemma}[\text{\cite[Thm.2]{Barker_CMA95}}]
\label{lem:bayes_rule_gauss}
Let $Y \sim \mathcal{G}(\omega,\Omega)$ and $\mathcal{V} \sim \mathcal{G}(0,V^{-1})$ be random vectors. Consider the linear transformation $Z = HY + \mathcal{V}$. The conditional distribution of $Y \mid Z = z$ is proportional to $\mathcal{G}(\omega + H^TV^{-1}z, \Omega + H^TV^{-1}H)$.
\end{lemma}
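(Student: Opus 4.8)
The plan is to prove the claim by a direct density computation, exploiting the fact that only \emph{proportionality} is asserted, so every factor that does not depend on $y$ may be folded into the normalization constant. First I would record the two ingredients in information form. Since $Y \sim \mathcal{G}(\omega,\Omega)$ has density proportional to $\exp\bigl(-\tfrac{1}{2}(y-\Omega^{-1}\omega)^T\Omega(y-\Omega^{-1}\omega)\bigr)$, expanding the quadratic and discarding the $y$-independent term $\omega^T\Omega^{-1}\omega$ shows that the prior density is proportional to $\exp\bigl(-\tfrac{1}{2}y^T\Omega y + y^T\omega\bigr)$. This is the key observation that drives the whole argument: in the information parametrization the log-density is, up to an additive constant, the quadratic form $-\tfrac{1}{2}y^T\Omega y + y^T\omega$, so the pair $(\omega,\Omega)$ can be read off directly as the linear coefficient and $-2$ times the quadratic coefficient.

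Next I would write the likelihood. Because $Z = HY + \mathcal{V}$ with $\mathcal{V} \sim \mathcal{G}(0,V^{-1})$ independent of $Y$, the conditional law of $Z$ given $Y=y$ is the law of $\mathcal{V}$ shifted by $Hy$; noting that $\mathcal{G}(0,V^{-1})$ carries covariance $V$, the likelihood $l(z\mid y)$ is proportional to $\exp\bigl(-\tfrac{1}{2}(z-Hy)^TV^{-1}(z-Hy)\bigr)$. Expanding this exponent gives $-\tfrac{1}{2}\bigl(z^TV^{-1}z - 2y^TH^TV^{-1}z + y^TH^TV^{-1}Hy\bigr)$, and the purely-$z$ term $z^TV^{-1}z$ is again constant in $y$ and may be dropped.

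I would then apply Bayes' rule, $p(y\mid z) \propto p(y)\,l(z\mid y)$, which amounts to adding the two exponents. Collecting terms in $y$ yields $-\tfrac{1}{2}y^T\bigl(\Omega + H^TV^{-1}H\bigr)y + y^T\bigl(\omega + H^TV^{-1}z\bigr)$. Matching this against the canonical information form $-\tfrac{1}{2}y^T\Omega'y + y^T\omega'$ identified in the first step, I read off $\Omega' = \Omega + H^TV^{-1}H$ and $\omega' = \omega + H^TV^{-1}z$, which is precisely $\mathcal{G}(\omega + H^TV^{-1}z,\, \Omega + H^TV^{-1}H)$, as claimed.

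There is no substantial obstacle here: the argument is a routine Gaussian computation and the result is standard. The only point requiring care is bookkeeping, namely tracking which terms depend on $y$ versus which are absorbed into the proportionality constant, together with respecting the information-form convention under which $\mathcal{G}(0,V^{-1})$ has covariance $V$ rather than $V^{-1}$. Working with proportionality throughout lets me skip completing the square entirely and simply equate coefficients, which is the cleanest route to the stated posterior.
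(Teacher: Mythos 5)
Your computation is correct: the information-form bookkeeping (read off $\omega'$ as the linear coefficient and $\Omega'$ as $-2$ times the quadratic coefficient of the log-density, then add the prior and likelihood exponents) is the standard and cleanest route to this result. The paper offers no proof of its own for this lemma --- it simply cites \cite[Thm.2]{Barker_CMA95} --- so there is nothing to compare against; your argument is exactly the canonical one, with the only implicit assumption being the independence of $\mathcal{V}$ and $Y$, which you correctly invoke when writing the likelihood.
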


Lemma \ref{lem:geom_mean_gauss} says that if the sensor priors are Gaussian $\mathcal{G}(\omega_{i,t},\Omega_{i,t})$, then after applying the geometric averaging in (\ref{eq:dist_estm}) the resulting distribution will still be Gaussian and its information vector and information matrix will be weighted averages of the prior ones. Lemma \ref{lem:bayes_rule_gauss} says that after applying Bayes rule the distribution will remain Gaussian. Combining the two allows us to derive the following linear Gaussian version of the estimator in (\ref{eq:dist_estm}):
\begin{equation}
\label{eq:gauss_dist_estm}
\begin{aligned}
\omega_{i,t+1} &= \sum_{j \in \mathcal{N}_i \cup \{i\}} \kappa_{ij} \omega_{j,t} + H_i^T V_i^{-1} z_{i}(t),\\
\Omega_{i,t+1} &= \sum_{j \in \mathcal{N}_i \cup \{i\}} \kappa_{ij} \Omega_{j,t} + H_i^T V_i^{-1} H_i,
\end{aligned}
\end{equation}
where $H_i := H_i(x_i)$ and $V_i := V_i(x_i)$. The estimate of sensor $i$ at time $t$ of the true target state $y$ is:
\begin{equation}
\label{eq:param_estm}
\hat{y}_i(t) := \Omega_{i,t}^{-1}\omega_{i,t}.
\end{equation}
In this linear Gaussian case, we prove a strong result about the quality of the estimates.

\begin{theorem}
\label{thm:dist_estm_L2}
Suppose that the communication graph $G$ is connected and the matrix $\begin{bmatrix} H_1^T & \hdots & H_n^T \end{bmatrix}^T$ has rank $d_y$. Then, the estimates (\ref{eq:param_estm}) of all sensors converge in mean square to $y$, i.e. $\displaystyle{\lim_{t \to \infty} \mathbb{E} \bigl[ \|\hat{y}_{i}(t) - y\|_2^2\bigr] = 0}$ for all $i$.
\end{theorem}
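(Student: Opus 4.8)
The plan is to exploit the fact that the information-matrix recursion in \eqref{eq:gauss_dist_estm} is deterministic, while the information-vector recursion has the same linear structure driven by the same weights $\kappa_{ij}$, so that the leading growing terms cancel when we form $\hat{y}_i(t) - y = \Omega_{i,t}^{-1}\omega_{i,t} - y$. First I would collect the weights into the row-stochastic matrix $K = [\kappa_{ij}]$ (with $\kappa_{ij}=0$ for $j \notin \mathcal{N}_i\cup\{i\}$). Connectivity of $G$ together with $\kappa_{ii}>0$ makes $K$ primitive, so Perron--Frobenius supplies a strictly positive left eigenvector $\pi$ with $\pi^T K = \pi^T$, $\pi^T\mathbf{1} = 1$, and geometric convergence $\|K^r - \mathbf{1}\pi^T\| \le c\rho^r$ for some $\rho<1$. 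Writing $M_i := H_i^T V_i^{-1}H_i$, the rank hypothesis on $[H_1^T\,\cdots\,H_n^T]^T$ and positive definiteness of each $V_i$ force $\sum_i M_i \succ 0$, and since $\pi>0$ the averaged information $\bar{M} := \sum_i \pi_i M_i$ is positive definite as well.

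Next I would solve both recursions in closed form. Unrolling the matrix recursion gives $\Omega_{i,t} = \sum_j (K^t)_{ij}\Omega_{j,0} + \sum_j\big(\sum_{r=0}^{t-1}K^r\big)_{ij}M_j$, and using $\sum_{r=0}^{t-1}K^r = t\,\mathbf{1}\pi^T + S_t$ with $S_t$ convergent, this reads $\Omega_{i,t} = t\bar{M} + P_{i,t}$ with $P_{i,t}$ bounded (in fact convergent). The measurement is $z_i(t) = H_i y + v_i(t)$, so the vector update is driven by $H_i^T V_i^{-1}z_i(t) = M_i y + n_i(t)$ with $n_i(t) := H_i^T V_i^{-1}v_i(t)$ zero-mean, independent across $i$ and $t$, and with covariance exactly $M_i$. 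The very same accumulation operator $\sum_{r=0}^{t-1}K^r$ multiplies the $M_i y$ terms, so $\omega_{i,t} = t\bar{M} y + b_{i,t} + N_{i,t}$, where $b_{i,t}$ is a bounded deterministic vector and $N_{i,t} := \sum_{s=0}^{t-1}\sum_j (K^{t-1-s})_{ij}n_j(s)$ is the accumulated noise.

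Then I would form the error. Because the $t\bar{M} y$ term is common to $\Omega_{i,t}y$ and to the deterministic part of $\omega_{i,t}$, a short computation gives $\hat{y}_i(t) - y = (t\bar{M} + P_{i,t})^{-1}\big(b_{i,t} - P_{i,t}y + N_{i,t}\big)$. Since $\bar{M} \succ 0$ and $P_{i,t}$ is bounded, $\|(t\bar{M} + P_{i,t})^{-1}\| = O(1/t)$ for $t$ large (which also guarantees $\Omega_{i,t}\succ 0$, so the estimate is eventually well defined). The deterministic numerator $b_{i,t} - P_{i,t}y$ is bounded, so its contribution is $O(1/t)$ and vanishes. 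For the random part, independence of the $n_j(s)$ yields $\mathbb{E}[N_{i,t}N_{i,t}^T] = \sum_{s=0}^{t-1}\sum_j (K^{t-1-s})_{ij}^2 M_j$; since the entries of $K^r$ lie in $[0,1]$ and sum to one over $j$, each time slice contributes a bounded matrix and hence $\mathbb{E}[\|N_{i,t}\|^2] = O(t)$. Combining, $\mathbb{E}[\|\hat{y}_i(t)-y\|_2^2] \le \|(t\bar{M}+P_{i,t})^{-1}\|^2\big(\text{const} + \mathbb{E}\|N_{i,t}\|^2\big) = O(1/t^2)\cdot O(t) = O(1/t) \to 0$.

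The main obstacle is conceptual rather than computational: recognizing that the deterministic information matrix and the random information vector evolve under the identical averaging operator $\sum_{r=0}^{t-1}K^r$, so that the $\Theta(t)$ growth cancels exactly in $\hat{y}_i(t)-y$ and leaves a bounded numerator. After that, the only quantitative point requiring care is that the accumulated noise $N_{i,t}$ grows only like $\sqrt{t}$ in $L^2$ (variance $\Theta(t)$) while the normalizer $\Omega_{i,t}^{-1}$ contracts like $1/t$; the linear---rather than quadratic---growth of the noise variance, which hinges on independence of $\{n_j(s)\}$ across both sensors and time, is what makes the product $O(1/t)$. The Perron--Frobenius estimates (positivity of $\pi$ and geometric decay of $K^r - \mathbf{1}\pi^T$) supply the remaining bookkeeping, and the rank assumption enters exactly once, to ensure $\bar{M}\succ 0$.
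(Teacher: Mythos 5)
Your proposal is correct and follows essentially the same route as the paper's proof: stack the recursions under the stochastic matrix $\mathcal{K}$, use connectivity to get $[\mathcal{K}^t]_{ij}\to\pi_j>0$ and hence $\sum_j\pi_j M_j\succ 0$ from the rank assumption, observe that the $\Theta(t)$ deterministic growth cancels in $\omega_{i,t}-\Omega_{i,t}y$, and bound the accumulated noise covariance by $O(t)$ via independence so the normalized error is $O(1/t)$. The only cosmetic difference is that the paper divides everything by $t+1$ and tracks the normalized quantities $C_{it}, B_{it}, \phi_{it}$, whereas you keep the raw quantities and track orders of growth directly.
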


%\textit{Remark}: Thm. \ref{thm:dist_estm_L2} is generalized to the case of unknown sensor configurations in Sec. \ref{sec:joint_loc_estm}. Its proof follows from the proof of Thm. \ref{thm:joint_estm_L2}.

The estimation procedure in (\ref{eq:gauss_dist_estm}), (\ref{eq:param_estm}) can be extended to track a dynamic target as in (\ref{eq:tmm}) by adding a local prediction step, same as that of the Kalman filter, at each sensor. The distributed linear filter is summarized in Alg. \ref{alg:dist_lin_filt} and Thm. \ref{thm:dist_estm_L2} guarantees its mean-square consistency for stationary targets. Its performance on dynamic targets was studied in the target tracking scenario introduced in Sec. \ref{sec:problem} and the results are presented in Fig. \ref{fig:track_inst} and Fig. \ref{fig:track_rmse}.

\begin{algorithm}[htb!]
\caption{Distributed Linear Estimator}
\begin{algorithmic}[0]
\footnotesize
\State \textbf{Input}: Prior $(\omega_{i,t}, \Omega_{i,t})$, messages $(\omega_{j,t}, \Omega_{j,t}),\forall j \in \mathcal{N}_i$, and measurement $z_i(t)$
\State \textbf{Output}: $(\omega_{i,t+1}, \Omega_{i,t+1})$
\begin{flalign*}
&\text{Update Step:}  &\omega_{i,t+1}&= \sum_{j \in \mathcal{N}_i \cup \{i\}} \kappa_{ij} \omega_{j,t} + H_i^T V_i^{-1} z_{i}(t)&\\
& &\Omega_{i,t+1}&= \sum_{j \in \mathcal{N}_i \cup \{i\}} \kappa_{ij} \Omega_{j,t} + H_i^T V_i^{-1} H_i&\\
&&\hat{y}_i(t+1) &= \Omega_{i,t+1}^{-1}\omega_{i,t+1}&\\
&\text{Prediction Step:} &\Omega_{i,t+1}&= (F \Omega_{i,t+1}^{-1} F^T + W)^{-1}&\\
&&\omega_{i,t+1}&= \Omega_{i,t+1}F\hat{y}_i(t+1)&
\end{flalign*}
\end{algorithmic}
\label{alg:dist_lin_filt}
\end{algorithm}

%\textit{Remark}: {\color{red}How does it relate to Olfati-Saber \cite{Olfati-Saber_CDC09} and how it improves on the existing work.}

% FIGURES
\begin{figure*}[htb!]
\centering
\begin{minipage}[t]{.62\textwidth}
  \centering
  \includegraphics[width=\linewidth]{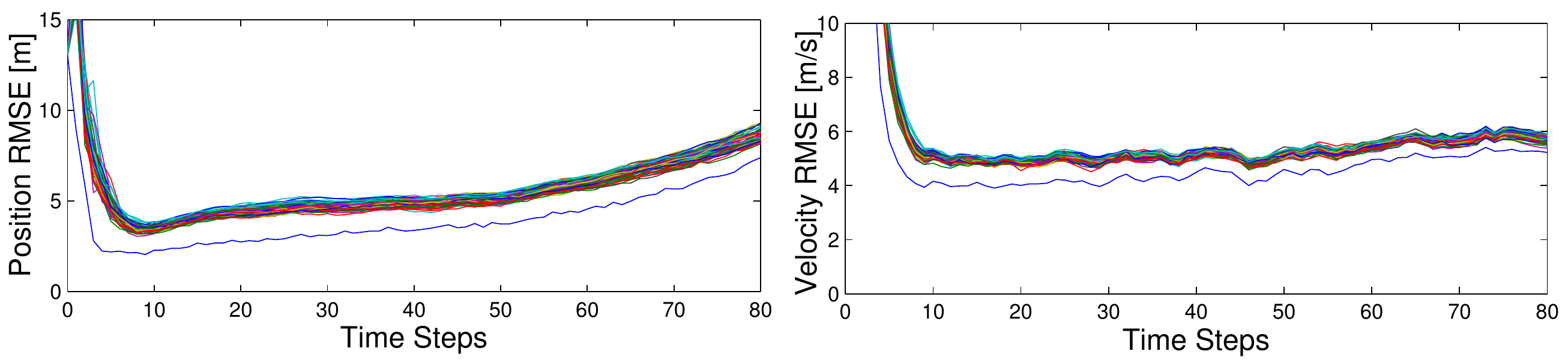}
  \captionof{figure}{Root mean squared error (RMSE) of the estimated target position and velocity obtained from averaging 50 simulated runs of the distributed linear estimator in the target tracking scenario (Fig. \ref{fig:track_inst}). The error increases because as targets move away from the sensor network, the covariance of the measurement noise grows linearly with distance. The errors of node 1 (blue) are lower because it was always placed at the origin and thus close to the starting target positions.} %The following parameters were used: $n=50,q = 1.5, \tau=0.2$.
  \label{fig:track_rmse}
\end{minipage}%
\begin{minipage}[t]{.04\textwidth}
$ $
\end{minipage}%
\begin{minipage}[t]{.35\textwidth}
  \centering
  \includegraphics[width=\linewidth]{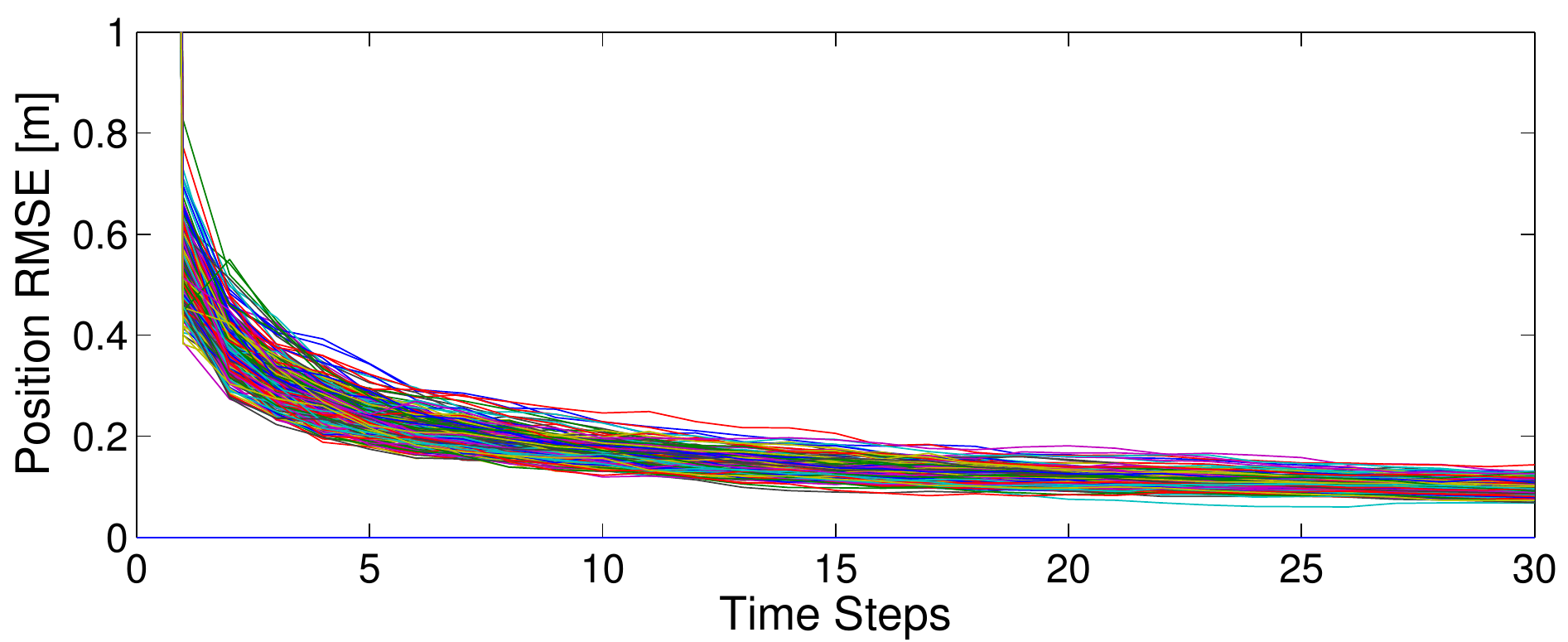}
  \captionof{figure}{Root mean squared error (RMSE) of the location estimates obtained from averaging 50 simulated runs of the distributed localization alogorithm  with randomly generated graphs with 300 nodes (e.g. Fig. \ref{fig:loc_inst})}
  \label{fig:loc_rmse}
\end{minipage}
\end{figure*}

%=================== LOCALIZATION ==========================%
%The distributed estimator in Alg. \ref{alg:dist_lin_filt} can be used to track a linear Gaussian target provided that the sensors know their configurations. As mentioned earlier this is typically not the case, especially for large sensor networks. The relative configuration measurements (\ref{eq:rel_meas}) can be used to estimate the sensor states, which in turn can be used for the target estimation. 
\section{Localization from Relative Measurements}
\label{sec:localization}
Target tracking via the distributed estimator in Alg. \ref{alg:dist_lin_filt} requires knowledge of the true sensor locations. As mentioned earlier this is typically not the case, especially when large sensor networks are deployed. This section describes a method for localization from relative measurements (\ref{eq:rel_meas}), whose strong convergence guarantees can be used to analyze the convergence of a joint localization and estimation procedure. The relative measurements, received by all sensors at time $t$, can be written in matrix form as follows:
\[
s(t) = (B \otimes I_d)^T x + \epsilon(t),
\]
where $B \in \mathbb{R}^{n \times m}$ is the incidence matrix of the communication graph $G$. All sensors agree to localize relative to node 1 and know that $x_1 = 0$. Let $\tilde{B} \in \mathbb{R}^{(n-1) \times m}$ be the incidence matrix with the row corresponding to sensor 1 removed. Further, define $\mathcal{E} := \mathbb{E} [\epsilon(t)\epsilon(t)^T] = \mathbf{diag}(\mathcal{E}_1,\ldots,\mathcal{E}_m)$, where $\{\mathcal{E}_k\}$ is an enumeration of the noise covariances associated with the edges of $G$. Given $t$ measurements, the least squares estimate of $x$ leads to the classical Best Linear Unbiased Estimator (BLUE), given by:
\begin{equation}
\label{eq:BLUE}
 \hat{x}(t) := \bigl(\tilde{B} \mathcal{E}^{-1} \tilde{B}^T\bigr)^{-1} \tilde{B} \mathcal{E}^{-1}\sum_{\tau=0}^{t-1} s(\tau),
\end{equation}
where the inverse of $\tilde{B} \mathcal{E}^{-1} \tilde{B}^T$ exists as long as the graph $G$ is connected \cite{Barooah_CS07}. Among all linear estimators of $x$, BLUE has the smallest variance for the estimation error \cite{Mendel_EstTh95}. The computation in (\ref{eq:BLUE}) can be distributed via a Jacobi algorithm for solving a linear system as follows. Each sensor maintains an estimate $\hat{x}_i(t)$ of its own state at time $t$ and a history of the averaged measurements, $\sigma_i(t) := \frac{1}{t+1} \sum_{\tau=0}^t \sum_{j \in \mathcal{N}_i} \mathcal{E}_{ij}^{-1} s_{ij}(\tau)$, received up to time $t$. Given prior estimates $(\hat{x}_i(t),\sigma_i(t))$, the update of the distributed Jacobi algorithm at sensor $i$ is:
\begin{equation}
\label{eq:jacobi}
\begin{aligned}
  \hat{x}_i(t+1) &= \biggl( \sum_{j \in \mathcal{N}_i} \mathcal{E}_{ij}^{-1} \biggr)^{-1} \biggl(\sum_{j \in \mathcal{N}_i} \mathcal{E}_{ij}^{-1} \hat{x}_j(t) - \sigma_i(t)\biggr),\\
  \sigma_i(t+1) &= \frac{1}{t+1}\biggl(t \sigma_i(t) + \sum_{j \in \mathcal{N}_i} \mathcal{E}_{ij}^{-1} s_{ij}(t) \biggr).
\end{aligned}
\end{equation}

Barooah and Hespanha \cite{Barooah_CS07, Barooah_PhD} show that, with a \textit{single round} of relative measurements, the the Jacobi algorithm provides an unbiased estimate of $x$. Here, we incorporate repeated sequential measurements and prove much stronger performance guarantee.

\begin{theorem}
\label{thm:self-localization}
Suppose that the communication graph $G$ is connected. Then, the estimates $\hat{x}_{i}(t)$ of the sensor configurations in (\ref{eq:jacobi}) are mean-square and strongly consistent estimators of the true sensor states, i.e.:
\[
\lim_{t \to \infty} \mathbb{E}\bigl[ \|\hat{x}_{i}(t)-x_i\|_2^2\bigr] = 0, \; \mathbb{P}\bigl( \lim_{t \to \infty}\|\hat{x}_{i}(t)-x_i\|_2 = 0 \bigr) = 1,\forall i
\]
\end{theorem}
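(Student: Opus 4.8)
The plan is to read the update \eqref{eq:jacobi} as a Jacobi iteration for the BLUE normal equations with a \emph{time-varying} right-hand side that is itself converging by the law of large numbers, and then to decouple these two effects through an error recursion. First I would vectorize the iteration: stacking $\hat{x}(t) := (\hat{x}_2(t)^T,\ldots,\hat{x}_n(t)^T)^T$ and $\sigma(t)$ analogously, and writing the grounded weighted Laplacian as $L := \tilde{B}\mathcal{E}^{-1}\tilde{B}^T = D - A$, where $D := \mathbf{diag}\bigl(\sum_{j\in\mathcal{N}_i}\mathcal{E}_{ij}^{-1}\bigr)$ collects the diagonal blocks and $A$ the off-diagonal blocks $\mathcal{E}_{ij}^{-1}$, the first line of \eqref{eq:jacobi} becomes $\hat{x}(t+1) = M\hat{x}(t) - D^{-1}\sigma(t)$ with Jacobi iteration matrix $M := D^{-1}A$.

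Second, I would establish the two ingredients separately. (i) Since $G$ is connected, $L$ is a positive definite grounded Laplacian and the associated Jacobi matrix $M$ is (sub)stochastic with every node able to reach a row that leaks weight to the grounded node $1$; hence $\rho(M) < 1$. This is exactly the Jacobi convergence already established by Barooah and Hespanha \cite{Barooah_CS07}, so I would cite it rather than reprove it. (ii) The running average $\sigma(t)$ is an empirical mean of the i.i.d.\ Gaussian quantities $\sum_{j\in\mathcal{N}_i}\mathcal{E}_{ij}^{-1}s_{ij}(\tau)$, whose expectation is $-(Lx)_i$ because $\sum_{j\in\mathcal{N}_i}\mathcal{E}_{ij}^{-1}(x_j-x_i) = -(Lx)_i$ (the $j=1$ term drops since $x_1=0$). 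By the strong law of large numbers $\sigma(t) \to -Lx$ almost surely, and a direct variance computation using independence across edges and time gives $\mathbb{E}\bigl[\|\sigma(t)+Lx\|_2^2\bigr] = O(1/t)$.

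Third, I would combine the two through the error $e(t) := \hat{x}(t) - x$. Using $L = D - A$ together with the identity $x = Mx + D^{-1}Lx$ and writing $\delta(t) := \sigma(t) + Lx$, the stacked recursion collapses to $e(t+1) = M e(t) - D^{-1}\delta(t)$, which unrolls to $e(t) = M^t e(0) - \sum_{k=0}^{t-1} M^{t-1-k} D^{-1}\delta(k)$. The homogeneous term vanishes geometrically since $\rho(M)<1$. For the forcing term I would invoke the deterministic fact that convolving a geometrically decaying kernel $\|M^j\| \le C\rho^j$, for some $\rho\in(\rho(M),1)$, with a null sequence produces a null sequence. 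Applying this on the probability-one event $\{\delta(k)\to 0\}$ yields strong consistency, while taking expectations and bounding $\sum_{k} \|M^{t-1-k}\|\,\sqrt{\mathbb{E}\|\delta(k)\|_2^2}$ by Minkowski's inequality with the $O(1/\sqrt{k})$ rate yields mean-square consistency.

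The main obstacle is precisely the coupling exploited in the third step: because $\sigma(t)$ is refined \emph{simultaneously} with $\hat{x}(t)$, this is not a Jacobi iteration on a fixed linear system but on a moving target, so standard Jacobi convergence cannot be quoted directly. The decisive observation is that the contraction $\rho(M)<1$ is geometric and therefore fast relative to the slow $O(1/\sqrt{t})$ decay of the measurement perturbation $\delta(t)$, which is what makes the convolution estimate summable and closes the argument; securing the explicit $O(1/t)$ variance rate on $\delta(t)$ is the ingredient that upgrades mere convergence to full mean-square consistency.
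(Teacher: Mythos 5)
Your proof follows the same skeleton as the paper's: the same matrix form of \eqref{eq:jacobi}, the same error recursion $e(t+1)=\tilde{D}^{-1}\tilde{A}\,e(t)+(\text{vanishing noise average})$, the same two ingredients ($\rho(\tilde{D}^{-1}\tilde{A})<1$ plus the strong law of large numbers applied to the running measurement average), and the same ``geometric kernel convolved with a null sequence'' argument for almost-sure convergence. Two points differ. First, your $L^2$ step is genuinely different and cleaner: you apply Minkowski's inequality directly to the unrolled solution, bounding $\|e(t)\|_{L^2}$ by $\sum_k\|M^{t-1-k}\|\,\|D^{-1}\|\,\|\delta(k)\|_{L^2}$ with $\|\delta(k)\|_{L^2}=O(k^{-1/2})$, whereas the paper propagates the cross-covariance $C(t):=(t+1)\mathbb{E}[e(t)u(t)^T]$ and the second moment $\Sigma(t)$ through coupled linear recursions and shows the forcing term $Q(t)\to 0$. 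Your route avoids that bookkeeping entirely and even yields an explicit rate; it buys simplicity at no loss of generality. Second, your justification of $\rho(\tilde{D}^{-1}\tilde{A})<1$ is the weak spot: the heuristic that $M=D^{-1}A$ is ``(sub)stochastic with leakage to the grounded node'' is only valid for \emph{scalar} edge weights. Here the blocks are $d\times d$ matrices $\bigl(\sum_j\mathcal{E}_{ij}^{-1}\bigr)^{-1}\mathcal{E}_{ij}^{-1}$, so $M$ is not a nonnegative matrix and Perron--Frobenius reasoning does not apply; this is precisely why the paper invokes Lemma~\ref{lem:spec_rad}, which requires positive definiteness of \emph{both} the grounded Laplacian $\tilde{L}$ \emph{and} the grounded signless Laplacian $|\tilde{L}|$. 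Your fallback of citing \cite{Barooah_CS07} for Jacobi convergence in the matrix-weighted setting is legitimate and rescues the step, but the stated reason would not survive as a proof if you had to supply it yourself. With that justification repaired (or the citation made load-bearing), the argument is complete.
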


The performance of our distributed localization algorithm was analyzed on randomly generated graphs with 300 nodes. The location priors were chosen from a normal distribution with standard deviation of 5 meters from the true node positions. An instance of the localization task is illustrated in Fig. \ref{fig:loc_inst}, while the estimation error is shown in Fig. \ref{fig:loc_rmse}.

%\begin{algorithm}[htb!]
%\caption{Distributed Jacobi Algorithm at Sensor $i$}
%\begin{algorithmic}[1]
%\footnotesize
%\State \textbf{Input}: $\hat{x}_{ij}(t)$ for all $j \in \mathcal{N}_i$
%\State \textbf{Output}: $\hat{x}_{ij}(t+1)$ for all $j \in \mathcal{N}_i$
%\State
%\State $u_{ij}(t) := \frac{1}{t+1} \sum_{\tau=0}^t s_{ij}(\tau)$ for all $j \in \mathcal{N}_i$
%\State $\biggl(\sum_{j \in \mathcal{N}_i} \mathcal{E}_{ij}^{-1} \biggr) \hat{x}_{ii}(t+1) = \sum_{j \in \mathcal{N}_i} \mathcal{E}_{ij}^{-1}(\hat{x}_{ij}(t) + u_{ij}(t))$
%\State Receive estimates $\hat{x}_{jj}(t+1)$ from neighbors $j \in \mathcal{N}_i$
%\State $\hat{x}_{ij}(t+1) \gets \hat{x}_{jj}(t+1), \quad \forall j \in \mathcal{N}_i$
%\end{algorithmic}
%\label{alg:distriubted_jacobi}
%\end{algorithm}

%However, it is important to analyze the evolution of the coupled estimation procedure. In particular, it is not clear that the strong convergence result in Thm. \ref{thm:dist_estm_L2} will continue to hold.

% When the true sensor configurations are unknown, the configuration estimates $\{\hat{x}_{i}(t)\}$ in (\ref{eq:jacobi}) provided by the Jacobi algorithm can be used in the target estimator (\ref{eq:gauss_dist_estm}), (\ref{eq:param_estm}) at time $t$ instead. 
%========= Joint Localization and Estimation =====%
\section{Joint Localization and Estimation}
\label{sec:joint_loc_estm}
\begin{figure*}[ht!]
  \centering
  \includegraphics[width=0.24\linewidth]{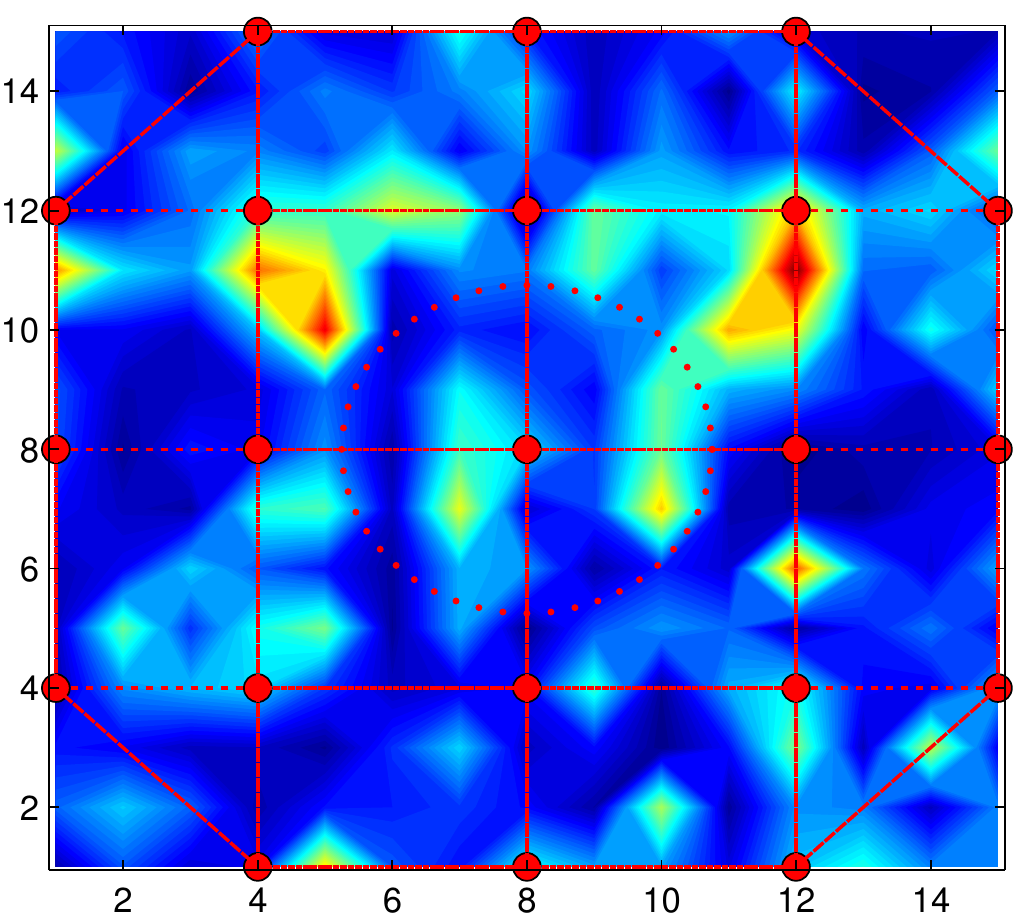}
  \raisebox{10pt}{\includegraphics[width=0.73\linewidth]{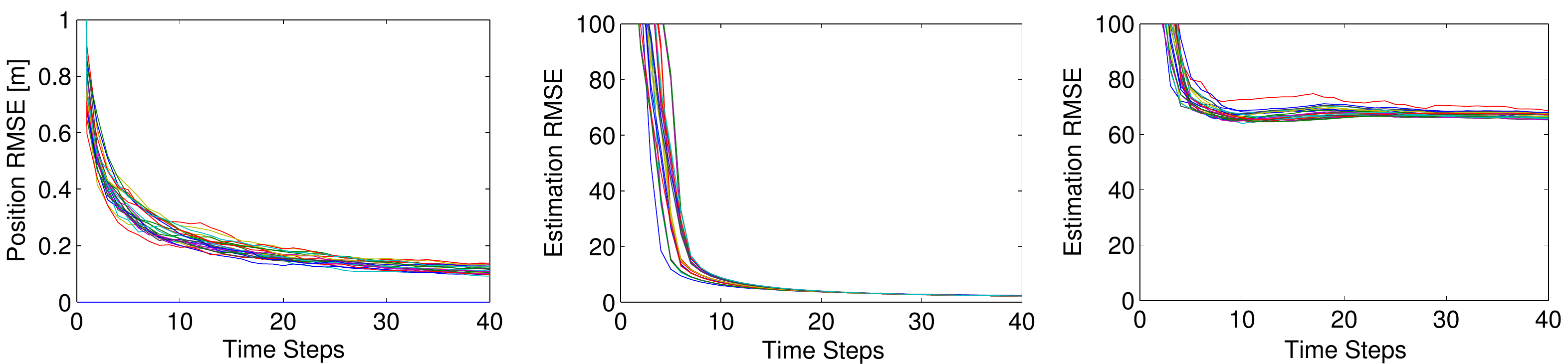}}
  \caption{Methane emission monitoring via a sensor network. The true (unknown) sensor locations (red dots), the sensing range (red circle), and a typical realization of the methane field are shown on the left. The root mean squared error (RMSE) of the location estimates and of the field estimates obtained from averaging 50 simulated runs of the joint localization and estimation algorithm with continuous sensor observation models are shown in the two middle plots. In an additional experiment, the sensors were placed on the boundaries of the cells of the discretized field. As the observation model for each sensor was defined in terms of the proximal environment cells, this made the model discontinuous. The rightmost plot illustrates that the field estimation error does not vanish when discontinuities are present.}
  \label{fig:field}
\end{figure*}

Having derived separate estimators for the sensor locations and the target state, we are ready to return to the original problem of joint localization and estimation. At time $t$, the location estimates $\{\hat{x}_{i}(t)\}$ in (\ref{eq:jacobi}) can be used in the target estimator (\ref{eq:gauss_dist_estm}), (\ref{eq:param_estm}) instead of the true sensor positions. It is important to analyze the evolution of the coupled estimation procedure because it is not clear that the convergence result in Thm. \ref{thm:dist_estm_L2} will continue to hold. Define the \textit{sensor information matrix} $M_i(x) := H_i(x)^T V_i(x)^{-1} H_i(x)$. In an analogy with the centralized Kalman filter, the sensor information matrix captures the amount of information added to the inverse of the covariance matrix during an update step of the Riccati map. From this point of view, it is natural to describe sensor properties in terms of the sensor information matrix. A regularity assumption which stipulates that nearby sensing locations provide similar information gain is necessary.

% on the sensors' operation,
\begin{assumption*}[Observation Model Continuity]
The sensor information matrices $M_i(x)$ are bounded\footnote{There exists a constant $q$ such that $\|M_i(x)\| \leq q < \infty$ for all $i$ and $x$.} continuous functions of $x$ for all $i$.
\end{assumption*}
%\textit{Remark}: The results hold if alternatively $M_i(x)$ are (globally) Lipschitz continuous in $x$.

The following theorem ensures that the target state estimator retains its convergence properties when used jointly with the distributed localization procedure.

\begin{theorem}
\label{thm:joint_estm_L2}
Let $\{\hat{x}_i(t)\}$ be strongly consistent estimators of the sensor configurations, i.e. $\hat{x}_i(t) \xrightarrow{a.s.} x_i, \forall i$. Suppose that the communication graph $G$ is connected and the matrix $\begin{bmatrix} H_1(x_1)^T & \hdots & H_n(x_n)^T \end{bmatrix}^T$ has rank $d_y$. Let $\delta >0$ be arbitrary. If each sensor $i$ updates its target estimate $(\omega_{i,t},\Omega_{i,t})$ as follows:
\begin{equation}
\label{eq:joint_estm_L2}
\begin{aligned}
\omega_{i,t+1} &= \sum_{j \in \mathcal{N}_i \cup \{i\}} \kappa_{ij} \omega_{j,t} + \widehat{H}_{i,t}^T \widehat{V}_{i,t}^{-1} z_{i}(t),\\
\Omega_{i,t+1} &= \sum_{j \in \mathcal{N}_i \cup \{i\}} \kappa_{ij} \Omega_{j,t} + \widehat{H}_{i,t}^T \widehat{V}_{it}^{-1} \widehat{H}_{i,t},\\
\hat{y}_i(t+1) &= \bigl(\Omega_{i,t+1}+(t+1)\delta I_d\bigr)^{-1}\omega_{i,t+1},
\end{aligned}
\end{equation}
where $\widehat{H}_{i,t} := H_i(\hat{x}_{i}(t))$ and $\widehat{V}_{i,t} := V_i(\hat{x}_i(t))$, then the asymptotic mean-square error of target estimates is $O(\delta^2)$:
\begin{align*}
\lim_{t \to \infty} \mathbb{E} \bigl[ \|\hat{y}_{i}(t)-y\|_2^2\bigr] &= \delta^2 y^T \bigl(\sum_{j=1}^n\pi_jM_j(x_j) +\delta I\bigr)^{-2}y,
\end{align*}
for all $i$, where $y$ is the true target state and $x_j$ is the true position of sensor $j$.
\end{theorem}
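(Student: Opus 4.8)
The plan is to exploit the consensus structure of the recursions in (\ref{eq:joint_estm_L2}) and to isolate the effect of the localization error, which is controlled almost surely, from that of the target measurement noise, which is controlled in $L^2$. Collect the weights into the row-stochastic matrix $K := [\kappa_{ij}]$. Since $G$ is connected and each self-weight $\kappa_{ii}$ is positive, $K$ is irreducible and aperiodic, hence it admits a unique stationary distribution $\pi$ with positive entries and $K^\tau \to \mathbf{1}\pi^T$ at a geometric rate. Writing $\widehat{M}_{i,t} := \widehat{H}_{i,t}^T \widehat{V}_{i,t}^{-1}\widehat{H}_{i,t} = M_i(\hat{x}_i(t))$ and unrolling the matrix recursion gives $\Omega_{i,t} = \sum_j [K^t]_{ij}\Omega_{j,0} + \sum_{s=0}^{t-1}\sum_j [K^{t-1-s}]_{ij}\widehat{M}_{j,s}$, and an analogous expression for $\omega_{i,t}$ with per-step input $\widehat{H}_{j,s}^T\widehat{V}_{j,s}^{-1}z_j(s)$.

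First I would identify the normalized limit of the information matrix. Because $\hat{x}_j(t)\xrightarrow{a.s.} x_j$ and $M_j$ is continuous, we have $\widehat{M}_{j,s}\to M_j(x_j)$ almost surely. Combining the geometric convergence $[K^\tau]_{ij}\to\pi_j$ with a Toeplitz/Ces\`aro argument on the double sum then yields $\tfrac{1}{t}\Omega_{i,t}\xrightarrow{a.s.}\bar{M}:=\sum_{j=1}^n \pi_j M_j(x_j)$ for every $i$; boundedness of $M_j$ keeps the Ces\`aro averages well behaved, and the geometric tail of $[K^\tau]_{ij}-\pi_j$ contributes a bounded sum that is negligible after division by $t$.

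Next I would treat $\tfrac{1}{t}\omega_{i,t}$. Since the target is static, (\ref{eq:som}) lets me split the input as $\widehat{H}_{j,s}^T\widehat{V}_{j,s}^{-1}z_j(s) = \widehat{H}_{j,s}^T\widehat{V}_{j,s}^{-1}H_j(x_j)\,y + \widehat{H}_{j,s}^T\widehat{V}_{j,s}^{-1}v_j(s)$. The signal term converges almost surely to $M_j(x_j)y$, so the same Toeplitz argument gives the limit $\bar{M}y$. For the noise term the key observation is that $\hat{x}_j(s)$ is a function of the relative-measurement noises alone, which are independent of the target noises $v_j(s)$; conditioning on the localization process, the summands are mean-zero and independent across time with uniformly bounded conditional covariance (here boundedness of $M_j$ and continuity of $V_j$ enter). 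Hence the variance of the normalized weighted sum is $O(1/t)$, and $\tfrac{1}{t}\omega_{i,t}\to\bar{M}y$ in $L^2$.

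Finally, rewriting the estimator as $\hat{y}_i(t+1) = \bigl(\tfrac{1}{t+1}\Omega_{i,t+1}+\delta I\bigr)^{-1}\tfrac{1}{t+1}\omega_{i,t+1}$, I would pass to the limit using continuity of matrix inversion at the invertible point $\bar{M}+\delta I\succ 0$ to obtain $\hat{y}_i(t)\to y_\delta := (\bar{M}+\delta I)^{-1}\bar{M}y$ in $L^2$, expand $\mathbb{E}\|\hat{y}_i(t)-y\|^2 = \mathbb{E}\|\hat{y}_i(t)-y_\delta\|^2 + 2\,\mathbb{E}\bigl[(\hat{y}_i(t)-y_\delta)^T(y_\delta-y)\bigr] + \|y_\delta-y\|^2$, and send $t\to\infty$. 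A direct computation gives $y_\delta - y = -\delta(\bar{M}+\delta I)^{-1}y$, so the limit equals $\delta^2 y^T(\bar{M}+\delta I)^{-2}y$, as claimed. The main obstacle is exactly the reconciliation of these two convergence modes, and the regularizing term $(t+1)\delta I$ is what makes it work: it guarantees $\bigl\|(\tfrac{1}{t}\Omega_{i,t}+\delta I)^{-1}\bigr\|\le 1/\delta$ uniformly in $t$, supplying the bounded, uniformly integrable control needed to upgrade the factorwise limits into an $L^2$ statement about $\hat{y}_i(t)$.
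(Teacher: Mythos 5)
Your proposal is correct and follows essentially the same route as the paper's Appendix C: unroll the consensus recursion, use almost-sure convergence of $M_i(\hat{x}_i(t))$ together with a Ces\`aro/Toeplitz argument to identify the limit $\sum_j \pi_j M_j$ of the normalized information matrix, exploit independence of the localization and measurement noises to make the noise contribution $O(1/t)$ in $L^2$, and rely on the uniform bound $\|(\tfrac{1}{t}\Omega_{i,t}+\delta I)^{-1}\|\le 1/\delta$ from the regularizer to upgrade to an $L^2$ statement via bounded convergence. Your final bookkeeping (showing $\hat{y}_i(t)\to y_\delta$ in $L^2$ and then computing $\|y_\delta-y\|^2$) is a slightly cleaner way to extract the exact limit than the paper's term-by-term expansion, but it is the same argument in substance.
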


The combined procedure specified by (\ref{eq:jacobi}) and (\ref{eq:joint_estm_L2}) provides a mean-square consistent way to estimate the sensor locations and the target state jointly. The performance of the joint algorithm was evaluated on the methane concentration estimation problem and the results are summarized in Fig. \ref{fig:field}.

\section{Conclusion}
This paper studied the problem of joint target tracking and node localization in sensor networks. A distributed linear estimator for tracking dynamic targets was derived. It was proven that the filter is mean-square consistent when estimating static states. Next, a distributed Jacobi algorithm was proposed for localization and its mean-square and almost sure consistency were shown. Finally, the combined localization and target estimation procedure was shown to have arbirarily small asymptotic estimation error. 

Future work will focus on strengthening the result in Thm. \ref{thm:joint_estm_L2} to mean-square consistency and relaxing the assumption of a strongly consistent localization procedure. Studying the relationship between our distributed linear estimator, the Kalman-Consensus filter \cite{Olfati-Saber_CDC09}, and the filter proposed by Khan et al. \cite{Khan_CDC10} is of interest as well.  

% Controlling a mobile sensor network to optimize the information acquisition process is of great interest as well.

%======================== OLD ================%

%Our target tracking filter, together with Olfati-Saber's
%Kalman-Consensus filter \cite{Olfati-Saber_CDC07, Olfati-Saber_CDC09}
%and the filter proposed by Khan et al. \cite{Khan_CDC10,
%  Shahrampour_NIPS13}, can all be considered versions of a \textit{distributed Kalman filter}.
%In future work we will focus on comparing the performance of these
%different solutions, and on answering the following open question: ``Which is the optimal distributed linear estimator?''. 

\appendices
\section*{Appendix A: Proof of Theorem \ref{thm:dist_estm_L2}}
\noindent Define the following:
\begin{align*}
\omega_t &:= \begin{bmatrix} \omega_{1t}^T & \hdots & \omega_{nt}^T \end{bmatrix}^T \quad &\Omega_t&:= \begin{bmatrix} \Omega_{1t}^T & \hdots & \Omega_{nt}^T \end{bmatrix}^T\\
M_{i} &:= H_{i}(x_i)^TV_i^{-1}(x_i)H_{i}(x_i)\quad &M&:=  \begin{bmatrix} M_{1}^T & \hdots & M_{n}^T \end{bmatrix}^T\\
\zeta(t) &:=\rlap{$\begin{bmatrix} H_{1}V_{1}^{-T}v_{1}(t)^T & \hdots & H_{n}V_{n}^{-T}v_n(t)^T \end{bmatrix}^T$.}
\end{align*}
The update equations of the filter (\ref{eq:gauss_dist_estm}) in matrix form are:
\begin{equation}
\label{eq:wO_system}
\begin{aligned}
\omega_{t+1} &= \bigl(\mathcal{K} \otimes I_{d_y}\bigr)\omega_t + My + \zeta(t),\\
\Omega_{t+1} &= \bigl(\mathcal{K} \otimes I_{d_y}\bigr)\Omega_t + M,
\end{aligned}
\end{equation}
where $\mathcal{K} =[\kappa_{ij}]$ with $\kappa_{ij} = 0$ if $j \notin \mathcal{N}_i \cup \{i\}$ is a stochastic matrix. The solutions of the linear systems are:
\begin{align*}
\omega_t &= \bigl(\mathcal{K} \otimes I_{d_y}\bigr)^t\omega_0 + \sum_{\tau=0}^{t-1}\bigl(\mathcal{K} \otimes I_{d_y}\bigr)^{t-1-\tau}\biggl(M y + \zeta(\tau)\biggr),\\
\Omega_t &= \bigl(\mathcal{K} \otimes I_{d_y}\bigr)^t \Omega_0 + \sum_{\tau=0}^{t-1}\bigl(\mathcal{K} \otimes I_{d_y}\bigr)^{t-1-\tau} M.
\end{align*}
Looking at the $i$-th components again, we have:
\begin{align*}
\frac{\omega_{it}}{t+1} &:=\frac{1}{t+1} \sum_{j=1}^n \bigl[\mathcal{K}^{t}\bigr]_{ij}\omega_{j0} +\\
&\quad\; \frac{1}{t+1} \sum_{\tau=0}^{t-1} \sum_{j=1}^n \bigl[\mathcal{K}^{t-\tau-1}\bigr]_{ij}(M_jy + H_j^T V_j^{-1}v_{j}(\tau)),\\
\frac{\Omega_{it}}{t+1} &:= \frac{1}{t+1} \sum_{j=1}^n \bigl[\mathcal{K}^{t}\bigr]_{ij}\Omega_{j0} +\frac{1}{t+1} \sum_{\tau=0}^{t-1} \sum_{j=1}^n \bigl[\mathcal{K}^{t-\tau-1}\bigr]_{ij}M_{j}.
\end{align*}
Define the following to simplify the notation:
\begin{equation}
\begin{aligned}
 g_{it} &:= \textstyle{\frac{1}{t+1} \sum_{j=1}^n \bigl[\mathcal{K}^{t}\bigr]_{ij}\omega_{j0}},\\
G_{it}&:= \textstyle{\frac{1}{t+1} \sum_{j=1}^n \bigl[\mathcal{K}^{t}\bigr]_{ij}\Omega_{j0}},\\
\phi_{it} &:=\textstyle{\frac{1}{t+1} \sum_{\tau=0}^{t-1} \sum_{j=1}^n \bigl[\mathcal{K}^{t-\tau-1}\bigr]_{ij} H_j^T V_j^{-1}v_{j}(\tau)},\\
C_{it} &:= \textstyle{\frac{1}{t+1} \sum_{\tau=0}^{t-1} \sum_{j=1}^n \bigl[\mathcal{K}^{t-\tau-1}\bigr]_{ij}M_{j}},\\
b_{it} &:= g_{it} - G_{it}y, \qquad B_{it} := \textstyle{\frac{1}{t+1}\Omega_{it}}.
\end{aligned}
\label{eq:shorthand-notation}
\end{equation}
With the shorthand notation:
\begin{equation}
\label{eq:wO_evolution}
  \frac{\omega_{it}}{t+1} = g_{it}+\phi_{it}+C_{it}y, \qquad B_{it}=\frac{\Omega_{it}}{t+1} =  G_{it}+C_{it},
\end{equation}
where $\phi_{it}$ is the only random quantity. Its mean is zero because the measurement noise has zero mean, while its covariance is:
\begin{align}
  &\mathbb{E} [\phi_{it} \phi_{it}^T] =\frac{1}{(t+1)^2}\mathbb{E}\biggl[\biggl(\sum_{\tau=0}^{t-1} \sum_{j=1}^n \bigl[\mathcal{K}^{t-\tau-1}\bigr]_{ij}H_{j}^T V_j^{-1} v_{j}(\tau)\biggr)\notag\\
  &\qquad \times \biggl(\sum_{s=0}^{t-1}\sum_{\eta=1}^n \bigl[\mathcal{K}^{t-s-1}\bigr]_{i\eta}H_{\eta}^TV_\eta^{-1} v_{\eta}(s)\biggr)^T\biggr]\notag\\
  &\!=\!\frac{1}{(t+1)^2}\!\sum_{j=1}^n \sum_{\tau=0}^{t-1} \bigl[\mathcal{K}^{t-\tau-1}\bigr]_{ij}^2H_j^T V_j^{-1} \mathbb{E}[v_{j}(\tau)v_{j}(\tau)^T]V_j^{-1}H_{j}\notag\\
  &= \frac{1}{(t+1)^2}\sum_{j=1}^n\sum_{\tau=0}^{t-1} \bigl[\mathcal{K}^{t-\tau-1}\bigr]_{ij}^2M_j \preceq \frac{1}{t+1} C_{it},\label{eq:covariance-phiit}
\end{align}
where the second equality uses the fact that $v_j(\tau)$ and $v_\eta(s)$ are independent unless the indices coincide, i.e. $\mathbb{E} v_j(\tau) v_\eta(s)^T = \delta_{\tau s} \delta_{j \eta} V_j$. The L{\"o}wner ordering inequality in the last step uses that $0\leq\bigl[\mathcal{K}^{t-\tau-1}\bigr]_{ij}\leq 1$ and $M_j \succeq 0$.

Since $G$ is connected, $\mathcal{K}$ corresponds to the transition matrix of an aperiodic irreducible Markov chain with a unique stationary distribution $\pi$ so that $\mathcal{K}^t \to \pi \mathbf{1}^T$ with $\pi_j >0$. This implies that, as $t \to \infty$, the numerators of $g_{it}$ and $G_{it}$ remain bounded and therefore $g_{it} \to 0$ and $G_{it} \to 0$. Since Ces\'aro means preserve convergent sequences and their limits:
\[
\frac{1}{t+1} \sum_{\tau=0}^{t-1} \bigl[\mathcal{K}^{t-\tau-1}\bigr]_{ij} \to \pi_j, \quad \forall i,
\]
which implies that $C_{it} \to \sum_{j=1}^n \pi_j M_j$. The full-rank assumption on $\begin{bmatrix} H_1^T & \hdots & H_n^T \end{bmatrix}^T$ and $\pi_j>0$ guarantee that $\sum_{j=1}^n \pi_j M_j$ is positive definite. Finally, consider the mean squared error:
\begin{align*}
&\mathbb{E}\bigl[(\hat{y}_i(t) - y)^T(\hat{y}_i(t) - y) \bigr]\\
&= \mathbb{E} \biggl\| \biggl(\frac{\Omega_{it}}{t+1}\biggr)^{-1}\frac{\omega_{it}}{t+1} -  \biggl(\frac{\Omega_{it}}{t+1}\biggr)^{-1}\biggl(\frac{\Omega_{it}}{t+1}\biggr)y \biggr\|_2^2\\
&= \mathbb{E} \bigl\| B_{it}^{-1} \bigl(g_{it} + C_{it}y + \phi_{it} - (G_{it} + C_{it})y \bigr)  \bigr\|_2^2\\
&= \mathbb{E} \|B_{it}^{-1}(b_{it} +\phi_{it})\|_2^2\\
&= \mathbb{E}\biggl[b_{it}^T B_{it}^{-T} B_{it}^{-1} b_{it} + 2b_{it}^TB_{it}^{-T} B_{it}^{-1}\phi_{it} + \phi_{it}^TB_{it}^{-T} B_{it}^{-1}\phi_{it}\biggr]\\
&\longeq{(a)}{} b_{it}^T B_{it}^{-T} B_{it}^{-1} b_{it} + \tr( B_{it}^{-1} \mathbb{E} [\phi_{it} \phi_{it}^T] B_{it}^T)\\
%&= b_{it}^T B_{it}^{-T} B_{it}^{-1} b_{it} + \\
%&\qquad\tr\biggl(B_{it}^{-1}\mathbb{E}\biggl[\biggl(\frac{1}{t+1} \sum_{\tau=0}^{t-1} \sum_{j=1}^n \bigl[\mathcal{K}^{t-\tau-1}\bigr]_{ij}H_{j}^T V_j^{-1} v_{j}(\tau)\biggr)\\
%&\qquad \times \biggl(\frac{1}{t+1} \sum_{s=0}^{t-1} \sum_{\eta=1}^n \bigl[\mathcal{K}^{t-s-1}\bigr]_{i\eta}H_{\eta}^T V_\eta^{-1} v_{\eta}(s)\biggr)^T\biggr]B_{it}^{-T} \biggr)\\
%&\longeq{(b)}{} b_{it}^T B_{it}^{-T}B_{it}^{-1}b_{it} + \frac{1}{(t+1)^2}\tr\biggl(B_{it}^{-1}\\
%&\quad \times \sum_{j=1}^n \sum_{\tau=0}^{t-1} \bigl[\mathcal{K}^{t-\tau-1}\bigr]_{ij}^2H_j^T V_j^{-1} \mathbb{E}\bigl[v_{j}(\tau)v_{j}(\tau)^T\bigr]V_j^{-1}H_{j}B_{it}^{-T}\biggr)\\
%&\longineq{(c)}{}{\leq} b_{it}^T B_{it}^{-T}B_{it}^{-1}b_{it} + \\
%&\qquad \frac{1}{t+1}\tr\biggl( B_{it}^{-1} \biggl(\frac{1}{t+1}\sum_{\tau=0}^{t-1}\sum_{j=1}^n \bigl[\mathcal{K}^{t-\tau-1}\bigr]_{ij} M_j\biggr) B_{it}^{-T} \biggr]\biggr)\\
&\longineq{(b)}{}{\leq} b_{it}^T B_{it}^{-T}B_{it}^{-1}b_{it} +\frac{1}{t+1}\tr(B_{it}^{-1}C_{it} B_{it}^{-T}) \to 0,
\end{align*}
where $(a)$ holds because the first term is deterministic, while the cross term contains $\mathbb{E}[\phi_{it}] = 0$. Inequality $(b)$ follows from \eqref{eq:covariance-phiit}. In the final step, as shown before $B_{it}^{-1} \to \bigl(\sum_{j=1}^n \pi_j M_j \bigr)^{-1}$ and $C_{it} \to \sum_{j=1}^n \pi_j M_j \succ 0$ remain bounded, while $b_{it} \to 0$ and $1/(t+1) \to 0$.\hfill$\qed$

%============ Localization Proof ================%
\section*{Appendix B: Proof of Theorem \ref{thm:self-localization}}
\label{app:A}
Define the \textit{generalized (matrix-weighted) degree matrix} $D \in \mathbb{R}^{nd \times nd}$ of the graph $G$ as a block-diagonal matrix with $D_{ii} := \sum_{j \in \mathcal{N}_i} \mathcal{E}^{-1}_{ij}$. Since $\mathcal{E}_{ij} \succ 0$ for all $\{i,j\}\in E$, the generalized degree matrix is positive definite, $D \succ 0$. Define also the \textit{generalized adjacency matrix} $A \in \mathbb{R}^{nd \times nd}$ as follows:
\[
A_{ij} := \begin{cases}
            \mathcal{E}^{-1}_{ij} & \text{if } \{i,j\} \in E,\\
            0 & \text{else}.
          \end{cases}
\]
The \textit{generalized Laplacian} and the \textit{generalized signless Laplacian} of $G$ are defined as $L := D-A$ and $|L| := D + A$, respectively. Further, let $R := (B\otimes I_d)^T \in \mathbb{R}^{md \times nd}$ and define the block-diagonal matrix $\mathcal{E} \in \mathbb{R}^{md \times md}$ with blocks $\mathcal{E}_{ij}$ for $\{i,j\} \in E$. It is straightforward to verify that $L = R^T \mathcal{E}^{-1} R \succeq 0$ and $|L| = (|B| \otimes I_d) \mathcal{E}^{-1} (|B| \otimes I_d)^T \succeq 0$, where $|B| \in \mathbb{R}^{n \times m}$ is the signless incidence matrix of $G$. Let $\tilde{B} \in \mathbb{R}^{(n-1) \times m}$ and $\tilde{R} \in \mathbb{R}^{md \times (n-1)d}$ be the matrices resulting after removing the row corresponding to sensor $1$ from $B$. Similarly, let $\tilde{D}, \tilde{A}, \tilde{L}, |\tilde{L}|\in \mathbb{R}^{(n-1)d \times (n-1)d}$ denote the generalized degree, adjacency, Laplacian, and signless Laplacian matrices with the row and column corresponding to sensor $1$ removed. Thm. 2.2.1 in \cite{Barooah_PhD} shows that $\tilde{L} \succ 0$ provided that $G$ is connected. The same approach can be used to show that $|\tilde{L}| \succ 0$. Let $\tilde{x} \in \mathbb{R}^{(n-1)d}$ be the locations of sensors $2,\ldots,n$ in the reference frame of sensor $1$ and $\hat{x}(t) \in \mathbb{R}^{(n-1)d}$ be their estimates at time $t$ obtained from (\ref{eq:jacobi}). The update in (\ref{eq:jacobi}) can be written in matrix form as follows:
\begin{equation}
\label{eq:jacobi_iter}
\tilde{D} \hat{x}(t+1) = \tilde{A} \hat{x}(t) + \tilde{R}^T \mathcal{E}^{-1} \biggl(\tilde{R} \tilde{x} + \frac{1}{t+1}\sum_{\tau = 0}^t \epsilon(\tau)\biggr).
\end{equation}
Define the estimation error at time $t$ as $e(t) \!:=\! \tilde{x} - \hat{x}(t)$ and let $u(t) \!:=\! \frac{1}{t+1}\sum_{\tau = 0}^t \epsilon(\tau)$. The dynamics of the error state can be obtained from (\ref{eq:jacobi_iter}):
\begin{align*}
e(t&+1) = \tilde{x} - \tilde{D}^{-1} \tilde{A} \hat{x}(t) - \tilde{D}^{-1} \tilde{L}\tilde{x} - \tilde{D}^{-1}\tilde{R}^T \mathcal{E}^{-1} u(t)\\
&= \tilde{x} - \tilde{D}^{-1} \tilde{A} \hat{x}(t) - \tilde{D}^{-1} \biggl(\tilde{D} - \tilde{A}\biggr) \tilde{x} - \tilde{D}^{-1}\tilde{R}^T \mathcal{E}^{-1} u(t)\\
&= \tilde{D}^{-1} \tilde{A} e(t) - \tilde{D}^{-1}\tilde{R}^T \mathcal{E}^{-1} u(t).
\end{align*}
The error dynamics are governed by a stochastic linear time-invariant system, whose internal stability depends on the eigenvalues of $\tilde{D}^{-1} \tilde{A}$. To show that the error dynamics are stable, we resort to the following lemma.
\begin{lemma}[\text{\cite[Lemma 4.2]{Ludwig_NM91}}]
\label{lem:spec_rad}
Let $L = D-A \in \mathbb{C}^{n \times n}$ be such that $D + D^* \succ 0$ and $L_\theta = D + D^* - (e^{i\theta} A + e^{-i\theta}A^*) \succ 0$ for all $\theta \in \mathbb{R}$. Then $\rho(D^{-1}A) < 1$.
\end{lemma}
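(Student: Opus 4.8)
The plan is to bound the spectral radius directly, by showing that every eigenvalue of $D^{-1}A$ has modulus strictly below $1$ through the scalar quadratic forms induced by its eigenvector. First I would fix an eigenpair: let $\lambda \in \mathbb{C}$ be an eigenvalue of $D^{-1}A$ with eigenvector $v \neq 0$, so that $Av = \lambda D v$. Multiplying on the left by $v^*$ and abbreviating $a := v^* A v$ and $d := v^* D v$ reduces the relation to the scalar identity $a = \lambda d$. The hypothesis $D + D^* \succ 0$ gives $2\,\mathrm{Re}(d) = v^*(D + D^*)v > 0$, hence $\mathrm{Re}(d) > 0$; in particular $d \neq 0$ and $\lambda = a/d$ is well defined.

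Next I would exploit the parametrized family of positive-definiteness conditions to control $|a|$. Evaluating the quadratic form $v^* L_\theta v > 0$ and using $v^* A^* v = \overline{a}$ gives, for every $\theta \in \mathbb{R}$,
\[
2\,\mathrm{Re}(d) - 2\,\mathrm{Re}\bigl(e^{i\theta} a\bigr) > 0.
\]
Since $\sup_{\theta}\mathrm{Re}(e^{i\theta}a) = |a|$, attained by aligning the phase so that $e^{i\theta}a$ is real and nonnegative, passing to this worst-case phase yields the key strict inequality $\mathrm{Re}(d) > |a|$.

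To finish, I would combine the two estimates. Because $|d| \geq \mathrm{Re}(d) > |a|$, the eigenvalue obeys
\[
|\lambda| = \frac{|a|}{|d|} \leq \frac{|a|}{\mathrm{Re}(d)} < 1,
\]
and since $\lambda$ was an arbitrary eigenvalue, this gives $\rho(D^{-1}A) < 1$, as claimed.

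I do not anticipate a serious technical obstacle, as the reduction to scalar forms renders the complex arithmetic transparent. The one point requiring care is the optimization over $\theta$: the strict conclusion survives only because $L_\theta \succ 0$ is assumed \emph{uniformly} in $\theta$, so the inequality persists at the worst-case phase where $\mathrm{Re}(e^{i\theta}a) = |a|$. Assuming positivity merely at $\theta = 0$ (i.e.\ that $L = D - A$ itself is positive definite) would not suffice, which is precisely why the lemma invokes the full family $\{L_\theta\}_{\theta \in \mathbb{R}}$.
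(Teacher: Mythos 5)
Your proof is correct and complete: the reduction to the Rayleigh quotients $a = v^*Av$, $d = v^*Dv$, the observation that $D+D^*\succ 0$ forces $\mathrm{Re}(d)>0$ (and hence $D$ invertible and $d\neq 0$), and the optimization over $\theta$ giving $\mathrm{Re}(d) > |a|$ and thus $|\lambda| = |a|/|d| \le |a|/\mathrm{Re}(d) < 1$ are all sound, and you correctly flag that the uniformity of $L_\theta \succ 0$ in $\theta$ is what lets you pass to the worst-case phase. Note that the paper itself offers no proof of this lemma---it is imported verbatim as Lemma 4.2 of the cited reference---so there is nothing in-paper to compare against; your argument is the standard self-contained one for results of this type.
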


Consider $\tilde{L}_\theta := 2(\tilde{D} - \cos(\theta) \tilde{A})$. If $\cos \theta = 0$, then $\tilde{L}_\theta = 2\tilde{D} \succ 0$. If $\cos \theta \in (0,1]$, then $\tilde{L}_\theta \succeq 2 \cos \theta \tilde{L} \succ 0$. Finally, if $\cos \theta \in [-1,0)$, then $\tilde{L}_\theta \succeq 2|\cos \theta| |\tilde{L}| \succ 0$. Thus, we can conclude that $\rho \bigl( \tilde{D}^{-1} \tilde{A} \bigr) < 1$. The proof of the theorem is completed by the following lemma with $\mathsf{F} := \tilde{D}^{-1} \tilde{A}$ and $\mathsf{G} :=  - \tilde{D}^{-1}\tilde{R}^T \mathcal{E}^{-1}$.

\begin{lemma}
Consider the discrete-time stochastic linear time-invariant system:
\begin{equation}
\label{eq:SLTI}
\textstyle{e(t+1) = \mathsf{F} e(t) + \mathsf{G} \frac{1}{t+1} \sum_{\tau=0}^t \epsilon(\tau)}
\end{equation}
driven by Gaussian noise $\epsilon(\tau) \sim \mathcal{N}(0, \mathcal{E})$, which is independent at any pair of times. If the spectral radius of $\mathsf{F}$ satisfies $\rho(\mathsf{F}) < 1$, then $e(t) \xrightarrow{a.s., L^2} 0$.
\end{lemma}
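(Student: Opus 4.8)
The plan is to solve the linear recursion in closed form and then reduce both convergence claims to a single deterministic fact: the convolution of a geometrically decaying kernel with a vanishing sequence vanishes. Unrolling \eqref{eq:SLTI} with input $u(\tau) := \frac{1}{\tau+1}\sum_{s=0}^{\tau}\epsilon(s)$ gives
\[
e(t) = \mathsf{F}^t e(0) + \sum_{\tau=0}^{t-1}\mathsf{F}^{t-1-\tau}\mathsf{G}\,u(\tau).
\]
Since $\rho(\mathsf{F}) < 1$, Gelfand's formula (equivalently, the Jordan form) supplies constants $C>0$ and $\rho \in (\rho(\mathsf{F}),1)$ with $\|\mathsf{F}^k\| \leq C\rho^k$, so the homogeneous term $\mathsf{F}^t e(0)$ decays to $0$ deterministically and everything then hinges on the forced term.

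First I would record the two properties of the input $u(\tau)$ that I need. Because $u(\tau)$ is the empirical mean of the iid zero-mean Gaussians $\epsilon(0),\ldots,\epsilon(\tau)$, the strong law of large numbers yields $u(\tau)\xrightarrow{a.s.}0$, while a direct second-moment computation gives $\mathbb{E}[u(\tau)u(\tau)^T] = \frac{1}{\tau+1}\mathcal{E}$, hence $\mathbb{E}\|u(\tau)\|^2 = \frac{\tr(\mathcal{E})}{\tau+1}\to 0$. Next I would isolate the deterministic workhorse as a small claim: if $a_k \geq 0$ and $a_k\to 0$, then $\sum_{k=0}^{t-1}\rho^{t-1-k}a_k\to 0$. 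This follows by fixing $\varepsilon>0$, choosing $K$ with $a_k<\varepsilon$ for $k\geq K$, and splitting the sum; the head consists of finitely many terms, each killed by the factor $\rho^{t-1-k}\to 0$, while the tail is bounded by $\varepsilon/(1-\rho)$, and $\varepsilon$ is arbitrary.

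With these in hand the two modes of convergence proceed in parallel. For almost-sure convergence I would argue pathwise: on the probability-one event where $u(\tau)\to 0$, set $a_\tau := \|\mathsf{G}\|\,\|u(\tau)\|$ and bound the forced term by $C\sum_{\tau=0}^{t-1}\rho^{t-1-\tau}a_\tau$, which tends to $0$ by the claim; hence $e(t)\to 0$ on that event, i.e. $e(t)\xrightarrow{a.s.}0$. For $L^2$ convergence I would apply the Minkowski inequality in $L^2$ to obtain
\[
\bigl(\mathbb{E}\|e(t)\|^2\bigr)^{1/2} \leq C\rho^t\|e(0)\| + C\|\mathsf{G}\|\sqrt{\tr(\mathcal{E})}\sum_{\tau=0}^{t-1}\rho^{t-1-\tau}(\tau+1)^{-1/2},
\]
and the claim with $a_\tau=(\tau+1)^{-1/2}$ drives the right-hand side to $0$.

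The main obstacle, and the reason a naive argument fails, is that the inputs $\{u(\tau)\}$ are strongly correlated across time: they are overlapping running averages of the same noise stream, so the cross-covariances $\mathbb{E}[u(\tau)u(s)^T]$ do not vanish and one cannot simply sum the variances of the individual forced-response contributions. The Minkowski inequality circumvents this by bounding the $L^2$ norm of the sum by the sum of $L^2$ norms before any cross term can appear, after which only the marginal second moments $\mathbb{E}\|u(\tau)\|^2 = \tr(\mathcal{E})/(\tau+1)$ are required; the geometric decay of the kernel then ensures the resulting convolution vanishes in the limit via the deterministic claim.
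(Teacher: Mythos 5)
Your proof is correct. For the almost-sure half you and the paper do essentially the same thing: invoke the strong law for $u(\tau)$, write the explicit solution of the recursion, and observe that a geometrically decaying kernel convolved with a vanishing input vanishes (the paper phrases this by restarting the recursion at a time $T$ after which $\|u(t)\|\leq\gamma$, which is the same splitting as in your deterministic claim). For the $L^2$ half you take a genuinely different route. The paper never works with the unrolled solution at the second-moment level; instead it propagates the scaled cross term $C(t):=(t+1)\mathbb{E}[e(t)u(t)^T]$ through the recursion $C(t+1)=\mathsf{F}C(t)+\mathsf{G}\mathcal{E}$ to conclude $\mathbb{E}[e(t)u(t)^T]\to 0$, and then treats $\Sigma(t):=\mathbb{E}[e(t)e(t)^T]$ as the state of a Lyapunov recursion $\Sigma(t+1)=\mathsf{F}\Sigma(t)\mathsf{F}^T+Q(t)$ with vanishing forcing $Q(t)$ --- that is precisely its mechanism for handling the correlation between $e(t)$ and $u(t)$ that you correctly identify as the obstacle. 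Your Minkowski argument sidesteps the cross-covariances entirely by passing to a sum of $L^2$ norms before any cross term can arise, after which only the marginal moments $\mathbb{E}\|u(\tau)\|^2=\tr(\mathcal{E})/(\tau+1)$ are needed; this is shorter and more elementary, and as a bonus the estimate $\sum_{\tau=0}^{t-1}\rho^{t-1-\tau}(\tau+1)^{-1/2}=O(t^{-1/2})$ hands you an explicit $O(1/t)$ mean-square rate that the paper's qualitative limit argument does not state. What the paper's route buys in exchange is the exact second-moment dynamics, reusable for asymptotic covariance computations. One cosmetic point: if $e(0)$ were random your homogeneous term should read $C\rho^{t}(\mathbb{E}\|e(0)\|^2)^{1/2}$; in the localization application $e(0)$ is deterministic, so nothing is lost.
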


\begin{proof}
By the strong law of large numbers \cite[Thm.2.4.1]{Durrett_Prob10}, $u(t) := \frac{1}{t+1} \sum_{\tau=0}^t \epsilon(\tau)$ converges to $0$ almost surely. Let $\Omega$ be the set with measure $1$ on which $u(t)$ converges so that for any $\gamma > 0$, $\exists\; T \in \mathbb{N}$ such that $\forall t \geq T$, $\|u(t)\| \leq \gamma$. For realizations in $\Omega$, the solution to (\ref{eq:SLTI}) with initial time $T$ is:
\[
\textstyle{e(t) = \mathsf{F}^{t-T} e(T) + \sum_{\tau=T}^{t-1} \mathsf{F}^{t-\tau-1} \mathsf{G} u(\tau)}.
\]
Then, $\displaystyle{\|e(t)\| \leq \| \mathsf{F}^{t-T} e(T)\| + \sum_{\tau = T}^{t-1}  \bigl\|\mathsf{F}^{t - \tau -1}\bigr\| \|\mathsf{G}\| \gamma}$. Taking the limit of $t$ and using that $\mathsf{F}$ is stable, we have
\[
\lim_{t \to \infty} \|e(t)\| \leq \biggl(\sum_{\tau = 0}^{\infty} \bigl\|\mathsf{F}^{\tau}\bigr\|\biggr) \|\mathsf{G}\| \gamma.
\]
Since $\rho(\mathsf{F}) < 1$, the system is internally (uniformly) exponentially stable, which is equivalent to $\sum_{\tau=0}^\infty \|\mathsf{F}^\tau\| \leq \beta$ for some finite constant $\beta$ \cite[Ch.22]{Rugh_LinSys96}. Thus, $\lim_{t \to \infty} \|e(t)\| \leq \beta \|\mathsf{G}\|\gamma$, which can be made arbitrarily small by choice of $\gamma$. We conclude that $e(t) \to 0$ on $\Omega$ and consequently $e(t) \xrightarrow{a.s.} 0$.

Next, we show convergence in $L^2$. First, consider the propagation of the cross term $C(t) := (t+1)\mathbb{E}e(t) u(t)^T$. Note that $\mathbb{E} u(t) = 0$ and $\mathbb{E} u(t)u(t)^T = \frac{\mathcal{E}}{t+1}$. Using the fact that $\epsilon(t+1)$ is independent of $e(t)$ and $u(t)$ we have
\begin{align*}
C(t+1) &= \mathbb{E} \bigl(\mathsf{F}e(t) + \mathsf{G}u(t)\bigr) \bigl((t+1) u(t) + \epsilon(t+1) \bigr)^T\\
&= \mathsf{F} C(t) + (t+1) \mathsf{G} \mathbb{E} u(t)u(t)^T = \mathsf{F} C(t) + \mathsf{G}\mathcal{E}.
\end{align*}
The solution of the above linear time-invariant system is:
\[
\textstyle{C(t) = \mathsf{F}^t C(0) + \sum_{\tau=0}^{t-1} \mathsf{F}^{t-\tau-1}\mathsf{G}\mathcal{E}}
\]
and since $\mathsf{F}$ is stable:$\displaystyle{\lim_{t \to \infty}} \mathbb{E}e(t) u(t)^T \!\!\!=\!\!\displaystyle{\lim_{t \to \infty}} \frac{1}{t+1}\!\sum_{\tau=0}^{t-1} \!\mathsf{F}^\tau \mathsf{G}\mathcal{E} \!=\! 0$. Now, consider the second moment of the error:
\begin{align*}
&\Sigma(t+1) := \mathbb{E} e(t+1) e(t+1)^T =\\
&\textstyle{\mathsf{F} \Sigma(t) \mathsf{F}^T \!+\! \mathsf{F} \biggl(\mathbb{E} e(t) u(t)^T\biggr) \mathsf{G}^T \!+\! \mathsf{G} \biggl(\mathbb{E} u(t) e(t)^T\biggr) \mathsf{F}^T \!\!+\! \frac{1}{t+1} \mathsf{G}\mathcal{E}\mathsf{G}^T}\\
&= \mathsf{F} \Sigma(t) \mathsf{F}^T + Q(t),
\end{align*}
where $\displaystyle{Q(t) := \frac{1}{t+1} \biggl(\mathsf{F} C(t) \mathsf{G}^T + \mathsf{G}C(t)^T\mathsf{F}^T + \mathsf{G}\mathcal{E}\mathsf{G}^T\biggr)}$. As shown above $Q(t) \to 0$ as $t \to \infty$, i.e. for any $\delta >0$, $\exists\; T' \in \mathbb{N}$ such that $\forall t \geq T'$, $\|Q(t)\| \leq \delta$. With initial time $T'$,
\[
\Sigma(t) = \mathsf{F}^{t-T'} \Sigma(T') (\mathsf{F}^T)^{t-T'}  + \sum_{\tau = T'}^{t-1} \mathsf{F}^{t-\tau-1} Q(\tau) (\mathsf{F}^T)^{t - \tau - 1}
\]
for $t \geq T'$. Then:
\begin{align*}
\|\Sigma(t)\| &\leq \bigl\|\mathsf{F}^{t-T'}\bigr\|^2\|\Sigma(T')\| + \sum_{\tau = 0}^{t-T'-1} \|\mathsf{F}^{\tau}\|^2 \delta\\
&\leq \alpha^2 \mu^{2(t-T')} + \delta \alpha^2 \sum_{\tau = 0}^{t-T'-1} \mu^{2\tau},
\end{align*}
where the existence of the constants $\alpha > 0$ and $0 \leq \mu < 1$ is guaranteed by the stability of $\mathsf{F}$. We conclude that $\lim_{t \to \infty} \|\Sigma(t)\| \leq \frac{\delta \alpha^2}{1-\mu^2}$, which can be made arbitrarily small by choice of $\delta$. In other words, $e(t) \xrightarrow{L^2} 0$.
\end{proof}
% L^2 weak law Thm. 2.2.3

%========== Estimation ===%
\section*{Appendix C: Proof of Theorem \ref{thm:joint_estm_L2}}
\label{app:B}
We use the same notation and follow the same steps as in the proof of Thm. \ref{thm:dist_estm_L2}, except that now the terms $H_i, V_i, M_i, M, \zeta(t), \phi_{it}, C_{it}, B_{it}$ are time-varying and stochastic because they depend on the location estimates $\hat{x}_{i}(t)$. To emphasize this, we denote them by $\widehat{H}_{it}, \widehat{V}_{it}, \widehat{M}_{it}, \widehat{M}_t, \widehat{\zeta}(t), \widehat{\phi}_{it}, \widehat{C}_{it}, \widehat{B}_{it}$, where for example $\widehat{M}_{it} := M_i(\hat{x}_i(t))$. The same linear systems \eqref{eq:wO_system} describe the evolutions of $\omega_t$ and $\Omega_t$ except that they are stochastic now and \eqref{eq:wO_evolution} becomes:
\[
\frac{\omega_{it}}{t+1} = g_{it} + \widehat{C}_{it}y + \widehat{\phi}_{it}, \qquad \widehat{B}_{it} := \frac{\Omega_{it}}{t+1} = G_{it} + \widehat{C}_{it}.
\]
We still have that $\mathcal{K}^t \to \pi \mathbf{1}^T$ with $\pi_j > 0$. Also, $g_{it}$, $G_{it}$, and $b_{it}$ are still deterministic and converge to zero as $t \to \infty$. The following observations are necessary to conclude that $\widehat{C}_{it}$ still converges to $\sum_{j=1}^n \pi_j M_j$.
\begin{lemma}
\label{lem:H_L2_conv}
If $\hat{x}_i(t) \xrightarrow{a.s.} x_i$, then $\widehat{M}_{it} \xrightarrow{a.s., L^2} M_i$.
\end{lemma}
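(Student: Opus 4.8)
The plan is to exploit the Observation Model Continuity assumption, which furnishes two properties of the map $x \mapsto M_i(x)$: continuity and uniform boundedness by the constant $q$. The almost sure convergence follows immediately from continuity, and the $L^2$ convergence is then obtained by combining almost sure convergence with the uniform bound through a dominated convergence argument. In short, the lemma asserts that a bounded continuous function of an almost surely convergent sequence converges both almost surely and in $L^2$.

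First I would establish the almost sure claim. Since $\hat{x}_i(t) \xrightarrow{a.s.} x_i$, there is an event of probability one on which $\hat{x}_i(t) \to x_i$. Fixing any sample path in this event and invoking the continuity of $M_i(\cdot)$ guaranteed by the assumption, composition preserves convergence, so $\widehat{M}_{it} = M_i(\hat{x}_i(t)) \to M_i(x_i) = M_i$. As this holds on a probability-one event, $\widehat{M}_{it} \xrightarrow{a.s.} M_i$.

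Next I would upgrade to convergence in $L^2$. The boundedness part of the assumption gives $\|\widehat{M}_{it}\| \leq q$ for all $t$, and likewise $\|M_i\| = \|M_i(x_i)\| \leq q$, so $\|\widehat{M}_{it} - M_i\|^2 \leq (\|\widehat{M}_{it}\| + \|M_i\|)^2 \leq 4q^2$ uniformly in $t$. Since $\|\widehat{M}_{it} - M_i\|^2 \xrightarrow{a.s.} 0$ by the previous step and is dominated by the integrable constant $4q^2$, the dominated convergence theorem yields $\lim_{t\to\infty} \mathbb{E}\bigl[\|\widehat{M}_{it} - M_i\|^2\bigr] = 0$, that is, $\widehat{M}_{it} \xrightarrow{L^2} M_i$.

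I do not anticipate a genuine obstacle, as the argument is elementary once the assumption is in hand. The only points requiring care are that the continuity argument must be applied pathwise on the probability-one event rather than in distribution, and that it is the \emph{uniform} bound $q$ from the assumption, not mere pointwise finiteness, that licenses the dominated convergence step and thereby the $L^2$ conclusion.
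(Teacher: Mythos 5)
Your proof is correct and follows exactly the paper's route: almost sure convergence via continuity of $M_i(\cdot)$ (the continuous mapping theorem, applied pathwise) and $L^2$ convergence via the uniform bound $q$ together with dominated convergence. The paper states this in two sentences; you have simply spelled out the same argument in full.
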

\begin{proof}
Almost sure convergence follows from the continuity of $M_i(\cdot)$ and the continuous mapping theorem \cite[Thm.3.2.4]{Durrett_Prob10}. $L^2$-convergence follows from the boundedness of $M_i(\cdot)$ and the dominated convergence theorem \cite[Thm.1.6.7]{Durrett_Prob10}.
\end{proof}
\begin{lemma}
\label{lem:cesaro_conv}
If $a_t \to a$ and $b_t \to b$, then $\frac{1}{t}\sum_{\tau=0}^{t-1} a_{t-\tau}b_\tau \to ab$. 
\end{lemma}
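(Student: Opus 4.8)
The plan is to reduce the claim to the elementary fact—already invoked in the proof of Thm.~\ref{thm:dist_estm_L2}—that Cesàro averaging preserves limits: if $c_t \to c$, then $\frac{1}{t}\sum_{\tau=0}^{t-1} c_\tau \to c$. First I would subtract the target value inside the sum and split the product by adding and subtracting a mixed term,
\[
a_{t-\tau} b_\tau - ab = (a_{t-\tau}-a)\,b_\tau + a\,(b_\tau - b).
\]
Averaging over $\tau = 0,\ldots,t-1$ then yields
\[
\frac{1}{t}\sum_{\tau=0}^{t-1} a_{t-\tau} b_\tau - ab = \frac{1}{t}\sum_{\tau=0}^{t-1} (a_{t-\tau}-a)\,b_\tau + a\,\frac{1}{t}\sum_{\tau=0}^{t-1}(b_\tau - b),
\]
so it suffices to show that both terms on the right vanish as $t \to \infty$.

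The second term is immediate: since $b_\tau - b \to 0$, its Cesàro mean tends to $0$, and multiplying by the fixed $a$ preserves this. For the first term, I would use that a convergent sequence is bounded, say $\|b_\tau\| \le M$ for all $\tau$, to bound its norm by $M\cdot \frac{1}{t}\sum_{\tau=0}^{t-1}\|a_{t-\tau}-a\|$. Re-indexing by $k = t-\tau$ (so $k$ runs from $1$ to $t$ as $\tau$ runs from $t-1$ down to $0$) turns this into $M\cdot\frac{1}{t}\sum_{k=1}^{t}\|a_k - a\|$, which is the Cesàro mean of the null sequence $\|a_k - a\| \to 0$ and hence tends to $0$. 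Combining the two estimates gives the claim.

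The only point requiring care—and the main, if modest, obstacle—is precisely this re-indexing: the factor $a_{t-\tau}$ couples the summation index to the horizon $t$, so one cannot directly treat $a_{t-\tau}b_\tau$ as a fixed sequence in $\tau$ and apply the Cesàro lemma. Isolating the fluctuation $(a_{t-\tau}-a)$ and bounding $b_\tau$ uniformly decouples the two indices and reduces the sum to a genuine Cesàro average of $\|a_k - a\|$, after which convergence is routine. The argument is unchanged whether $a_t,b_t$ are scalars, vectors, or matrices, provided $|\cdot|$ is read as a submultiplicative norm; this generality is what is needed when the lemma is applied with $a_k = [\mathcal{K}^{k}]_{ij} \to \pi_j$ and $b_\tau = \widehat{M}_{j\tau} \to M_j$ to establish $\widehat{C}_{it} \to \sum_{j=1}^n \pi_j M_j$.
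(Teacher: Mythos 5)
Your proof is correct and follows essentially the same route as the paper's: an add-and-subtract decomposition of $a_{t-\tau}b_\tau - ab$ into two cross terms, bounding one convergent factor uniformly, and reducing each piece to a Ces\`aro mean of a null sequence (with the same re-indexing $k = t-\tau$ for the $a$-term). The only cosmetic difference is that you bound $b_\tau$ and Ces\`aro-average $\|a_k - a\|$, whereas the paper bounds $a_{t-\tau}$ and Ces\`aro-averages $b_\tau - b$; the argument is symmetric and both work.
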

\begin{proof}
The convergence of $a_t$ implies its boundedness, $|a_t| \leq q < \infty$. Then, notice $ab=\frac{1}{t}\sum_{\tau=0}^{t-1} ab$ and
\begin{align*}
\biggl|\frac{1}{t}\sum_{\tau=0}^{t-1}a_{t-\tau}b_\tau &- ab\biggr| = \biggl|\frac{1}{t}\sum_{\tau=0}^{t-1}\bigl( a_{t-\tau}(b_\tau-b) +(a_{t-\tau} - a)b\bigr)\biggr|\\
&\leq \biggl|\frac{1}{t}\sum_{\tau=0}^{t-1} a_{t-\tau}(b_\tau-b)\biggr| + \biggl|\frac{1}{t}\sum_{\tau=0}^{t-1}(a_{t-\tau} - a)b\biggr|\\
&\leq \biggl|q\biggl(\frac{1}{t}\sum_{\tau=0}^{t-1}b_\tau-b\biggr)\biggr| + \biggl|\biggl(\frac{1}{t}\sum_{\tau=1}^{t}a_{\tau} - a\biggr)b\biggr|,
\end{align*}
where both terms converge to zero since Ces\'aro means preserve convergent sequences and their limits.
\end{proof}
Combining Lemma \ref{lem:H_L2_conv}, $\bigl[\mathcal{K}^t\bigr]_{ij} \to \pi_j$, and Lemma \ref{lem:cesaro_conv}, we have:
\[
\frac{1}{t+1} \sum_{\tau=0}^{t-1} \bigl[\mathcal{K}^{t-\tau-1}\bigr]_{ij}\widehat{M}_{j\tau} \xrightarrow{a.s.} \bigl[\pi \mathbf{1}^T\bigr]_{ij}M_j = \pi_jM_j.
\]
Moreover, $0 \leq [\mathcal{K}^t]_{ij} \leq 1$ and the boundedness of $\widehat{M}_{jt}$ imply, by the bounded convergence theorem \cite[Thm.1.6.7]{Durrett_Prob10}, that the sequence above converges in $L^2$ as well:
\begin{equation}
\label{eq:c_convergence}
\textstyle{\widehat{C}_{it} \xrightarrow{a.s., L^2} \sum_{j=1}^n \pi_j M_j \succ 0.}
\end{equation}
%where as before the positive definiteness of the limit is guaranteed by the full-rank assumption on $\begin{bmatrix} H_1^T & \hdots & H_n^T \end{bmatrix}^T$ and $\pi_j>0$. 
In turn, \eqref{eq:c_convergence} guarantees that:
\begin{equation}
\label{eq:B_convergence}
\textstyle{\widehat{B}_{it}^{-2} = \bigl(G_{it} + \widehat{C}_{it}\bigr)^{-2} \xrightarrow{a.s.} \bigl(\sum_{j=1}^n \pi_j M_j\bigr)^{-2}}
\end{equation}
but is not enough to ensure that $\mathbb{E} \bigl[\widehat{B}_{it}^{-2}\bigr]$ remains bounded as $t \to \infty$. The parameter $\delta>0$ is needed to guarantee the boundedness. In particular, define $\widehat{B}_{it}(\delta) := \widehat{B}_{it} + \delta I_{d_y}$. Then
\[
\widehat{B}_{it}(\delta)^{-2} = \bigl(G_{it} + \widehat{C}_{it} + \delta I_{d_y}\bigr)^{-2} \prec \frac{1}{\delta^2} I_{d_y}
\]
and by the bounded convergence theorem and (\ref{eq:B_convergence}):
\begin{equation}
\label{eq:invB_bounded}
\textstyle{\widehat{B}_{it}(\delta)^{-2} \xrightarrow{a.s., L^1} \bigl(\sum_{j=1}^n \pi_j M_j + \delta I_{d_y}\bigr)^{-2}},
\end{equation}
so that $\lim_{t \to \infty} \mathbb{E}\bigl[\widehat{B}_{it}(\delta)^{-2}\bigr] < \infty$. From (\ref{eq:c_convergence}) and the boundedness of $\widehat{B}_{it}(\delta)^{-1}$ and $\widehat{C}_{it}$, we also have:
\begin{align}
\label{eq:trB_bounded}
&\widehat{B}_{it}(\delta)^{-1}\widehat{C}_{it} \widehat{B}_{it}(\delta)^{-T} \xrightarrow{a.s.,L^2} \\
&\biggl(\sum_{j=1}^n \pi_j M_j + \delta I_{d_y}\biggr)^{-1} \biggl(\sum_{j=1}^n \pi_j M_j\biggr)\biggl(\sum_{j=1}^n \pi_j M_j + \delta I_{d_y}\biggr)^{-T}.\notag
\end{align}
Since $\widehat{H}_{it}$ and $\widehat{V}_{it}$ depend solely on $\hat{x}_i(t)$, they are independent of $v_i(t)$. Because $\mathbb{E}[v_j(\tau)]=0$, $\mathbb{E}[\widehat{H}_{j\tau}^T\widehat{V}_{j\tau}^{-1}v_j(\tau)]= 0$ and as before $\mathbb{E}[\widehat{\phi}_{it}]= 0$. Since $\widehat{B}_{it}(\delta)$ is independent of $v_{i}(t)$ as well, $\mathbb{E}\bigl[\widehat{B}_{it}(\delta)^{-2}\widehat{\phi}_{it} \bigr] = 0$ and a result equivalent to \eqref{eq:covariance-phiit} holds:
\begin{align}
  &\mathbb{E} [\widehat{B}_{it}(\delta)^{-1}\widehat{\phi}_{it} \widehat{\phi}_{it}^T\widehat{B}_{it}(\delta)^{-T}] \notag\\
  &= \mathbb{E} \biggl[ \widehat{B}_{it}(\delta)^{-1} \biggl(\frac{1}{(t+1)^2}\sum_{j=1}^n\sum_{\tau=0}^{t-1} \bigl[\mathcal{K}^{t-\tau-1}\bigr]_{ij}^2 \widehat{M}_{j\tau}\biggr) \widehat{B}_{it}(\delta)^{-T} \biggr] \notag\\
  &\preceq \frac{1}{t+1} \mathbb{E} \bigl[\widehat{B}_{it}(\delta)^{-1}\widehat{C}_{it}\widehat{B}_{it}(\delta)^{-T}\bigr].\label{eq:covariance-hatphiit}
\end{align}
Finally, we can consider the mean squared error:
\begin{align*}
&\mathbb{E}\bigl[\|\hat{y}_i(t) - y\|_2^2 \bigr]=\mathbb{E} \biggl\|\widehat{B}_{it}(\delta)^{-1}\frac{\omega_{it}}{t+1} - \widehat{B}_{it}(\delta)^{-1}\widehat{B}_{it}(\delta)y\biggr\|_2^2\\
%&= \mathbb{E} \biggl\| \biggl(\frac{\Omega_{it}}{t+1}+ \delta I_{d_y}\biggr)^{-1}\frac{\omega_{it}}{t+1} \\
%&\qquad - \biggl(\frac{\Omega_{it}}{t+1}+ \delta I_{d_y}\biggr)^{-1}\biggl(\frac{\Omega_{it}}{t+1}+ \delta I_{d_y}\biggr)y \biggr\|_2^2\\
&= \mathbb{E} \biggl\|\widehat{B}_{it}(\delta)^{-1} \biggl(g_{it} + \widehat{C}_{it}y + \widehat{\phi}_{it} - (G_{it} + \widehat{C}_{it}+ \delta I_{d_y})y \biggr)  \biggr\|_2^2\\
&= \mathbb{E} \|\widehat{B}_{it}(\delta)^{-1}(b_{it} +\widehat{\phi}_{it} +\delta y)\|_2^2\\
&= \mathbb{E}\biggl[b_{it}^T \widehat{B}_{it}(\delta)^{-2} b_{it} + \widehat{\phi}_{it}^T\widehat{B}_{it}(\delta)^{-2}\widehat{\phi}_{it} + \delta^2 y^T\widehat{B}_{it}(\delta)^{-2}y\\
&\quad + 2b_{it}^T\widehat{B}_{it}(\delta)^{-2}\widehat{\phi}_{it} + 2\delta y^T\widehat{B}_{it}(\delta)^{-2}\widehat{\phi}_{it}+ 2\delta b_{it}^T\widehat{B}_{it}(\delta)^{-2}y\biggr]\\
&= b_{it}^T \mathbb{E}\bigl[ \widehat{B}_{it}(\delta)^{-2}\bigr] b_{it} + \tr( \mathbb{E} \bigl[\widehat{B}_{it}(\delta)^{-1} \widehat{\phi}_{it} \widehat{\phi}_{it}^T \widehat{B}_{it}(\delta)^{-T}\bigr])\\
&\qquad + \delta^2y^T\mathbb{E}\bigl[ \widehat{B}_{it}(\delta)^{-2}\bigr]y + 2\delta b_{it}^T\mathbb{E}\bigl[\widehat{B}_{it}(\delta)^{-2}\bigr]y\\
%&= b_{it}^T \mathbb{E}\widehat{B}_{it}(\delta)^{-2} b_{it} + 2\delta b_{it}^T\mathbb{E}\widehat{B}_{it}(\delta)^{-2}y + \delta^2y^T\mathbb{E}\widehat{B}_{it}(\delta)^{-2}y\\
%&+\tr\biggl(\mathbb{E}\biggl[\widehat{B}_{it}(\delta)^{-1}\biggl(\frac{1}{t+1} \sum_{\tau=0}^{t-1} \sum_{j=1}^n \bigl[\mathcal{K}^{t-\tau-1}\bigr]_{ij}\widehat{H}_{j\tau}^T \widehat{V}_{j\tau}^{-1} v_{j}(\tau)\biggr)\\
%&\; \times \biggl(\frac{1}{t+1} \sum_{s=0}^{t-1} \sum_{\eta=1}^n \bigl[\mathcal{K}^{t-s-1}\bigr]_{i\eta}\widehat{H}_{\eta s}^T \widehat{V}_{\eta s}^{-1} v_{\eta}(s)\biggr)^T\widehat{B}_{it}(\delta)^{-T}\biggr] \biggr)\\
%&\longeq{(a)}{} b_{it}^T \mathbb{E}\widehat{B}_{it}(\delta)^{-2} b_{it} + 2\delta b_{it}^T\mathbb{E}\widehat{B}_{it}(\delta)^{-2}y + \delta^2y^T\mathbb{E}\widehat{B}_{it}(\delta)^{-2}y\\
%&\quad + \frac{1}{(t+1)^2}\tr\biggl(\mathbb{E} \biggl[\biggl(\sum_{j=1}^n \sum_{\tau=0}^{t-1} \bigl[\mathcal{K}^{t-\tau-1}\bigr]_{ij}^2\widehat{M}_{j\tau}\biggr)\widehat{B}_{it}(\delta)^{-2}\biggr]\biggr)\\
&\longineq{\eqref{eq:covariance-hatphiit}}{}{\leq} b_{it}^T \mathbb{E}\widehat{B}_{it}(\delta)^{-2} b_{it} + 2\delta b_{it}^T\mathbb{E}\widehat{B}_{it}(\delta)^{-2}y + \delta^2y^T\mathbb{E}\widehat{B}_{it}(\delta)^{-2}y\\
&\qquad+ \frac{1}{t+1}\tr\biggl( \mathbb{E} \biggl[\widehat{B}_{it}(\delta)^{-1} \widehat{C}_{it} \widehat{B}_{it}(\delta)^{-T} \biggr]\biggr)\\
& \to \delta^2y^T\bigl(\sum_{j=1}^n \pi_j M_j + \delta I_{d_y}\bigr)^{-2}y.
\end{align*}
In the final step, the first two terms go to zero because $b_{it} \to 0$ and $\lim_t \mathbb{E}\widehat{B}_{it}(\delta)^{-2} < \infty$ from (\ref{eq:invB_bounded}), the third term converges in view of (\ref{eq:invB_bounded}) again, while the last term goes to zero because the trace is bounded in the limit in view of (\ref{eq:trB_bounded}).\hfill$\qed$

\bibliographystyle{IEEEtran}
\bibliography{bib/ref.bib}
%==================================================================%

\end{document}